\documentclass[11pt,a4paper]{article}
\usepackage[english]{babel}
\usepackage{amsmath,amsthm,amssymb,epsfig,latexsym}
\usepackage[colorlinks=true,linkcolor=blue,citecolor=magenta]{hyperref}
\usepackage{color}
\usepackage{ulem}
\usepackage{bm}
\usepackage[numbers,square,sort&compress]{natbib}

\newcommand{\so}{{\scriptscriptstyle \rm I}}
\newcommand{\st}{{\scriptscriptstyle \rm I\hspace{-1pt}I}}

\newcommand{\bu}{\bar u}
\newcommand{\bv}{\bar v}

\newcommand{\bt}{\bar t}
\newcommand{\bw}{\bar w}
\newcommand{\bz}{\bar z}

\newcommand{\be}[1]{\begin{equation}\label{#1}}
\newcommand{\ba}[1]{\begin{multline}\label{#1}}
\newcommand{\ee}{\end{equation}}
\newcommand{\ea}{\end{multline}}

\newtheorem{thm}{Theorem}[section]
\newtheorem{prop}{Proposition}[section]
\newtheorem{lemma}{Lemma}[section]
\newtheorem{rem}{Remark}[section]
\newtheorem{cor}{Corollary}[section]
\newtheorem{Def}{Definition}[section]

\def\qed{\hfill\nobreak\hbox{$\square$}\par\medbreak}
 \makeatletter
 \@addtoreset{equation}{section}
 \makeatother
 
\newcommand{\bea}{\begin{eqnarray}}
\newcommand{\eea}{\end{eqnarray}}

\def\F{{\mathcal{F}}}

\def\BB{{\mathbb{B}}}

\def\CC{{\mathbb{C}}}
\newcommand{\ZZ}{{\mathbb Z}}
\def\TT{{\mathbb{T}}}
\def\FF{{\rm F}}

\def\rvac{|0\rangle}

\def\EE{{\rm E}}
\def\FF{{\rm F}}

\def\TT{\mathsf{T}}

\def\tFF{\tilde{\rm F}}

\def\Ee{{\sf e}}

\def\r#1{\eqref{#1}}
\def\sk#1{\left(#1\right)}

\def\Pfp{{P}^+_f}
\def\Pfm{{P}^-_f}

\def\RDYAn{\mathcal{D}Y_{\rm R}(\mathfrak{gl}_{n})}

\def\DYSn{\mathcal{D}Y(\mathfrak{sl}_{n})}
\def\RDYSn{\mathcal{D}Y_{\rm R}(\mathfrak{sl}_{n})}

\def\DYg{\mathcal{D}Y(\mathfrak{g})}
\def\DXg{\mathcal{D}X(\mathfrak{g})}

\def\ggo{\mathfrak{g}}

\def\ot{\otimes}
\def\cF{{\sf F}}

\def\Xif#1#2#3#4#5#6#7#8
{\Xi^{0,1}_{#1,#2}
{\small\Big(\begin{array}{ccc}#3&#4&#5\\ #6&#7&#8 \end{array}\Big)}}

\def\Oml{\Omega^{L}}
\def\Omr{\Omega^{R}}

\def\vn{\varnothing}

\def\fb{\mathfrak{b}}
\def\hc{\mathsf{h}}

\def\Om{\Omega}

\def\ovs{\rho}
\def\PSR{\Gamma}
\def\rt{\mathsf{r}}
\def\frd{\xi}
\def\Ig{I_\ggo}

\def\cre{\eta}
\def\gln{\mathfrak{gl}_n}
\def\gga{\mathfrak{sl}_n}
\def\goN{\mathfrak{o}_N}
\def\gsp{\mathfrak{sp}_{2n}}
\def\ggc{\mathfrak{sp}_{2n}}
\def\ggb{\mathfrak{o}_{2n+1}}
\def\ggd{\mathfrak{o}_{2n}}
\def\hC{H}
\def\ago{\mathfrak{a}}
\def\nus{\nu}
\def\nusp{\varrho}
\def\sB{U_{\mathcal{B}}}
\def\DeltaD{\Delta^{{\rm (D)}}}
\def\coun{\varepsilon}
\def\counD{\varepsilon^{(D)}}
\def\bgam{\gamma}
\def\bvphi{\varphi}
\def\mus{\mu}
\def\Tzm{\mathfrak{T}}

\def\fg{\mathfrak{g}}

\def\fl{\mathfrak{l}}
\def\fs{\mathfrak{s}}
\def\fp{\mathfrak{p}}
\def\fo{\mathfrak{o}}

\def\cV{\mathcal{V}}

\def\zcent{\mathfrak{z}}

\def\epsi{\epsilon}

\def\BF{\mathbb{F}}

\def\Tb{\bm{T}}

\def\IE{\mathcal{E}}
\def\IF{\mathcal{F}}

\def\oUF{\overline{Y}_F}

\def\bbw{\bar{\bm{\omega}}}
\def\Bf{\mathbf{B}}

\def\sB{U}
\def\RDYg{\mathcal{D}Y_{\rm R}(\fg)}

\begin{document}

\vspace{12pt}

\begin{center}
\begin{LARGE}
{\bf Yangian Doubles and off-Shell Bethe Vectors}
\end{LARGE}

\vspace{10mm}

\begin{large}
A.~Liashyk${}^{a}$, S.~Pakuliak${}^{b}$ and E.~Ragoucy${}^{b}$
\end{large}

\vspace{10mm}

${}^a$ {\it Beijing Institute of Mathematical Sciences and Applications (BIMSA),\\
No. 544, Hefangkou Village Huaibei Town, Huairou District Beijing 101408, China}

\vspace{2mm}

${}^b$ {\it Laboratoire d'Annecy de Physique Théorique (LAPTh)\\ 
CNRS \& Université Savoie Mont Blanc\\
Chemin de Bellevue, BP 110, F-74941, Annecy-le-Vieux cedex, France}
\vspace{2mm}

E-mails: liashyk@bimsa.cn, pakuliak@lapth.cnrs.fr, ragoucy@lapth.cnrs.fr

\bigskip

\end{center}

\begin{abstract}
Off-shell Bethe vectors for a generic $\fg$ invariant integrable model
are constructed through the currents of the Yangian doubles of the 
classical series.  These off-shell Bethe vectors  
are shown to satisfy the defining properties which were used in \cite{LPR-RR} 
to prove the rectangular recurrence relations and verify the eigenvalue property
of the on-shell Bethe vectors in $\ggo$-invariant integrable models.   
\end{abstract}


\section{Introduction}

The algebraic Bethe ansatz for $\ggo$-invariant integrable models provides a
systematic framework for constructing common eigenvectors of commuting integrals
of motion. Originally developed for low-rank algebras, in particular for 
$\ggo=\mathfrak{gl}_2$ \cite{S22}, it allows not only the construction of on-shell
Bethe vectors but also the computation of scalar products, norms, and form factors.

Extending these results to higher-rank symmetries and to algebras different from
$\gln$ requires an explicit construction of off-shell Bethe vectors. Early
approaches based on the nested Bethe ansatz \cite{KR1,KR2,Res85,Res91} yield Bethe
equations but are not efficient in constructing Bethe vectors themselves. A
combinatorial construction using the trace formula was developed for $\gln$- and
$U_q(\gln)$-invariant models in \cite{TV, TV-Com}, and later extended to the $C$ and $D$
series in \cite{GR}. To our best knowledge
no trace formula exist for the $B$ series.

Since 2007, a new method based on Drinfeld’s 'new' realization of quantum affine
algebras \cite{Dnew} has been proposed \cite{EKhP}. It relies on projections onto
intersections of different types of Borel subalgebras and expresses off-shell Bethe vectors as
projections of ordered products of simple root currents. This approach was applied
to supersymmetric Yangian doubles in \cite{HLPRS17}, with computational techniques
developed in \cite{KhP}. Using the Ding--Frenkel isomorphism \cite{DF}, off-shell Bethe vectors
can be written as polynomials in the monodromy matrix entries $T_{i,j}(u)$, whose
recurrence relations coincide with those obtained from the trace formula
\cite{TV-Com}.

For the calculation of scalar products, explicit formulas are not as important as the recurrence
relations arising from the action of monodromy matrix elements on off-shell Bethe
vectors. These relations determine recurrence relations for the highest coefficients of scalar products \cite{HLPRS-SP-gl, LPR-SP}. 
It was shown in \cite{LPR-RR} that for generic $\ggo$-invariant
models of all classical series, these relations follow from a minimal set of
defining relations for off-shell Bethe vectors. The goal of this paper is to prove
that the Bethe vectors defined by currents and projections satisfy these relations, with
the normalization of \cite{LPR-RR} ensuring the cancelation of zeros caused by the
Serre relations \cite{E}. 

We employ two realizations of Yangian doubles of classical type: the $RTT$
realization \cite{Dnew, KhT-DY} based on $\ggo$-invariant
$R$-matrices \cite{Y,ZZ79}, and the current realization formulated in 
 \cite{D-FTINT,KhT-DY,JYL20}. 

The paper is organized as follows. Section~\ref{g-all} presents a uniform
description of Yangian doubles for all algebras of the classical series
in terms of the currents and in $RTT$ realization.  
The main theorem~\ref{main-thm} of this paper is formulated in the Section~\ref{BVsect}.
It states that the vectors obtained as  
projections of ordered products of currents acting on the vacuum vector
 satisfy the defining relations of the off-shell Bethe vectors in a 
 generic $\fg$-invariant integrable model. 
It allows to describe a fixed normalization of the off-shell Bethe vectors 
 given by the corollary~\ref{normal} to this theorem.
 Using the definition~\ref{BV-proj} of the 
 off-shell Bethe vectors, their analytical properties are investigated in
Section~\ref{an-pro-BV}.  Appendix~\ref{AppB} contains examples of the 
off-shell Bethe vectors for the algebra $\fs\fl_3$.

\section{Two realizations of the Yangian double $\DYg$}\label{g-all}

We use the same notation as in the paper \cite{LPR-SR-CC}.
We  recall them shortly for  completeness. 
Fix a positive integer $n\geq2$. 
Let $\mathfrak{g}$ be one of the Lie 
algebras $\mathfrak{sl}_n$, $\mathfrak{o}_{2n+1}$, 
$\mathfrak{sp}_{2n}$ and $\mathfrak{o}_{2n}$ 
corresponding to the $A_{n-1}$, $B_n$, $C_n$ and $D_n$  classical series. 
Let $e_a$ be an orthonormal basis 
\begin{equation}\label{ortbas}
(e_a,e_b)=\delta_{a,b},\qquad a,b=1,\ldots,n
\end{equation}
with respect to the natural scalar product in $\CC^n$.

Let $\PSR_\ggo$ be the following set of labels of the Dynkin diagram for $\ggo$  \begin{equation}\label{PSR}
\PSR_\ggo=\begin{cases}
\{1,2,\ldots,n-1\},&\ggo=\gga\,,\\
\{0,1,2,\ldots,n-1\},&\ggo\not=\gga\,.\\
\end{cases}
\end{equation}
The cardinality of $\PSR_\fg$ is the rank of $\fg$. The positive simple roots $\rt_a$, $a\in\PSR_\ggo$ for the finite-dimensional
algebra $\ggo$ are 
\begin{equation}\label{sim-r}
\begin{aligned}
 &&&\rt_{a}=e_{a+1}-e_{a},  \quad&&a=1,\ldots,n-1,  
 \quad&&\ggo=\mathfrak{sl}_{n}\,, \\
\rt_0&=e_1, &&\rt_a=e_{a+1}-e_a, &&a=1,\ldots,n-1,  && 
\ggo=\mathfrak{o}_{2n+1}\,,\\
\rt_0&=2\,e_1, &&\rt_a=e_{a+1}-e_a,  &&a=1,\ldots,n-1, 
&&\ggo=\mathfrak{sp}_{2n}\,,\\
\rt_0&=e_2+e_1,\ &&\rt_a=e_{a+1}-e_a,  &&a=1,\ldots,n-1,   
&&\ggo=\mathfrak{o}_{2n}\,.
\end{aligned}
\end{equation}

Let $\fb_{a,b}$ and $\ago_{a,b}$ be respectively the normalized
matrix of the simple root scalar products
  and the Cartan matrix of $\ggo$:
\begin{equation}\label{Cmat}
 \fb_{a,b}=(\rt_a,\rt_b)/2=d_a\, \ago_{a,b}/2,\qquad 
d_a=(\rt_a,\rt_a)/2\,,
\end{equation}
where $(\cdot,\cdot)$ is  the scalar product \r{ortbas}. 
We introduce the rational functions $g(u,v)$, $h_a(u,v)$, $f_a(u,v)$, 
$\bgam_a(u,v)$ and $h_{a,b}(u,v)$ for $a\not=b$ as follows
\begin{subequations}\label{2.5}
\begin{equation}\label{g-h-f-gen}
g(u,v)=\frac{c}{u-v},\quad 
h_a(u,v)=\fb_{a,a}+g(u,v)^{-1},\quad 
h_{a,b}(u,v)=2\,\fb_{a,b}+g(u,v)^{-1}\,,
\end{equation}
\begin{equation}\label{g-gen}
f_a(u,v)=h_a(u,v)\,g(u,v),\quad \bgam_a(u,v)=
\frac{f_a(u,v)}{h_a(u,v)^{\zeta_a}\,h_a(v,u)^{\zeta_a}},\quad
\zeta_a=\frac{(3-\,d_{a})(2d_{a}-1)}{d_{a}+1}\,.
\end{equation}
\end{subequations}
For the simple roots $\rt_a$ such that $\fb_{a,a}=1$ we will 
use simplified notations for the functions \r{2.5} 
\begin{equation}\label{h-f-sim}
\begin{split}
h_a(u,v)\equiv h(u,v)&=\frac{u-v+c}{c},\qquad
f_a(u,v)\equiv f(u,v)=\frac{u-v+c}{u-v}\,,\\
\gamma_a(u,v)\equiv \gamma(u,v)&=\frac{f(u,v)}{h(u,v)\,h(v,u)}=
\frac{g(u,v)}{h(v,u)}=\frac{c^2}{(u-v)(v-u+c)}\,.
\end{split}
\end{equation}

In order to describe Yangian doubles simultaneously for all $\ggo$, 
we introduce two discrete parameters $\frd_\ggo$ and $\epsi_\ggo$ 
\begin{equation}\label{pri-epsi}
\frd_\ggo=\begin{cases}
n+1,&\mathfrak{g}=\mathfrak{sl}_{n}\,,\\
0,&\mathfrak{g}=\mathfrak{o}_{2n+1}\,,\\
1,&\mathfrak{g}=\mathfrak{sp}_{2n},\ \mathfrak{o}_{2n}\,,
\end{cases}\qquad 
\epsi_{\ggo}=\begin{cases}
0,& \ggo=\gga,\\
1,&\ggo=\goN,\\
-1,& \ggo=\gsp\,.
\end{cases}
\end{equation}

For any integer $i$ we define the map $i\to i^\prime$ as 
\begin{equation}\label{prime}
i^{\,\prime}=\frd_\ggo-i\,.
\end{equation}
This map acts invariantly on the set  
\begin{equation}\label{I-g}
\Ig=\{n',n'+1,\ldots,n-1,n\}\,.
\end{equation}
The index $i\in\Ig$ will be used further to numerate the matrix entries 
of a generic monodromy matrix in a generic $\ggo$-invariant 
integrable model. 
Note that the set $\Ig$ contains 
the sets $\PSR_\fg$ and $(\PSR_\fg)'$:
$\Ig=\PSR_\fg\cup(\PSR_\fg)'\cup\{n',n\}$.

 Denote by $N_\ggo$ the cardinality of the set $\Ig$. 
It is easy to verify that 
\begin{equation}\label{N-g}
N_\ggo=|\Ig|\equiv N=2n+1-\frd_\ggo\,.
\end{equation} 
Explicitly, $N_\ggo$ is equal to $n$, $2n+1$, $2n$ and $2n$
for $\ggo=\gga$, $\ggb$, 
$\gsp$, $\ggd$  respectively, and coincides with the dimension of the 
vector or first fundamental representation of $\ggo$.

For any node $a\in\PSR_\ggo$ of the Dynkin diagram we define 
two parameters $\nusp_a$ and $\nus_a$ as
\begin{equation}\label{nus-all}
\nusp_a=1+\delta_{a,0}\,(1-\epsi_\ggo)/2,\qquad
\nus_a=1+\delta_{a,0}\,(\frd_\ggo+(\epsi_\ggo-1)/2)\,.
\end{equation}
Note that for $\ggo=\gga$ we have $\nusp_a=\nus_a=1$ 
for $a\in\PSR_{\gga}$ since the
Dynkin diagram for this algebra does not contain the node with the label 0.

\subsection{Yangian double $\DYg$}\label{sect21}

For any two labels $a,b\in\PSR_\ggo$, the
rational functions   $\cre_{a,b}(u,v)$
are defined as follows
\begin{equation}\label{CuCu}
\cre_{a,b}(u,v)=\begin{cases}
h_{a,b}(u,v),&b>a,\quad \fb_{a,b}\not=0\,,\\
g(v,u)^{-1},&b<a,\quad \fb_{a,b}\not=0\,,\\
h_a(u,v),&b=a,\quad \fb_{a,b}\not=0\,,\\
1,& \fb_{a,b}=0\,.
\end{cases}
\end{equation}
 
Let $F_a[\ell]$, $E_a[\ell]$, $\hC_a[\ell]$, $\ell\in\ZZ$,  $a\in\PSR_\ggo$
be infinite set of generators.  Let $\DXg$ be the associative algebra 
built from these generators. 
Let $\mathbf{1}\in\DXg$ 
be the unity generator.  Each generator of the algebra $\DXg$ has a degree 
defined by the rules 
\begin{equation}\label{degY}
{\rm deg}\,(F_a[\ell])={\rm deg}\,(E_a[\ell])={\rm deg}\,(\hC_a[\ell])=\ell,\qquad
{\rm deg}(\mathbf{1})=0\,.
\end{equation}
In order to describe the relations between generators of the algebra $\DXg$ 
it is convenient to gather them into formal generating series 
\begin{equation}\label{gen-ser-g}
\begin{aligned}
F_a(u)&=\sum_{\ell\in\ZZ}F_a[\ell](u/c)^{-\ell-1},
\qquad &E_a(u)&=\sum_{\ell\in\ZZ}E_a[\ell](u/c)^{-\ell-1}\,,\\
\hC_a^+(u)&=\nusp_a\,\mathbf{1}+\sum_{\ell\geq 0}\hC_a[\ell](u/c)^{-\ell-1},
\qquad
&\hC_a^-(u)&=\nusp_a\,\mathbf{1}-\sum_{\ell< 0}\hC_a[\ell](u/c)^{-\ell-1}\,,
\end{aligned}
\end{equation}
using formal parameter $u$.  We will call the generating series 
\r{gen-ser-g} the {\sl currents}. 

The relations between generators of the algebra $\DXg$ are given 
by the relations between currents ($a,b\in\PSR_\fg$)
\begin{subequations}\label{CuCoRe}
\begin{equation}\label{KK}
\hC^{q_1}_a(u)\,\hC^{q_2}_b(v)=\hC^{q_2}_b(v)\,\hC^{q_1}_a(u),\quad 
q_{1,2}=\pm\,,
\end{equation}
\begin{equation}\label{KF}
\cre_{a,b}(u,v)\,\hC^{\pm}_a(u)\, F_b(v) +
\cre_{b,a}(v,u)\,F_b(v)\, \hC^{\pm}_a(u)=0\,,
\end{equation}
\begin{equation}\label{FF}
\cre_{a,b}(u,v)\, F_a(u)\,F_b(v)+  \cre_{b,a}(v,u)\, F_b(v)\,F_a(u)=0\,,
\end{equation}
\begin{equation}\label{KE}
\cre_{a,b}(u,v)\, E_b(v)\, \hC^{\pm}_a(u)+
\cre_{b,a}(v,u)\, \hC^{\pm}_a(u)\, E_b(v)=0\,,
\end{equation}
\begin{equation}\label{EE}
\cre_{a,b}(u,v)\, E_b(v)\,E_a(u)+ \cre_{b,a}(v,u)\, E_a(u)\,E_b(v)=0\,,
\end{equation}
\begin{equation}\label{EF}
\begin{split}
[E_a(u),F_b(v)]
&=c\ \delta_{a,b}\ \delta(u,v)\Big(\hC^+_{a}(u)-\hC^-_{a}(v)\Big)\,,
\end{split}
\end{equation}
\begin{equation}\label{SR-g}
\begin{split}
\mathop{\rm Sym}\limits_{u_1,\ldots,u_{m_{a,b}}}
\Big[F_a(u_1),\ldots[F_a(u_{m_{a,b}}),F_{b}(v)]\ldots\Big]&=0\,,\\
\mathop{\rm Sym}\limits_{u_1,\ldots,u_{m_{a,b}}}
\Big[E_a(u_1),\ldots[E_a(u_{m_{a,b}}),E_{b}(v)]\ldots\Big]&=0\,.
\end{split}
\end{equation}
\end{subequations}
In the relation \r{EF} $\delta(u,v)$ is  the formal series 
\begin{equation}\label{delta}
\delta(u,v)=\frac{1}{u}\sum_{\ell\in\ZZ}\frac{v^\ell}{u^\ell}\,.
\end{equation}
In \r{SR-g} $a\not=b$, $m_{a,b}=1-\ago_{a,b}$ and 
${\rm Sym}_{u_1,\ldots,u_{m_{a,b}}}$ stands for  
 the sum over all permutations $\sigma$ of the formal 
 parameters $u_1,\ldots,u_{m_{a,b}}$:
\begin{equation}\label{Sym}
\mathop{\rm Sym}\limits_{u_1,\ldots,u_{m_{a,b}}}G(u_1,\ldots,u_{m_{a,b}})=
\sum_{\sigma\in S_{m_{a,b}}}G(u_{\sigma(1)},\ldots,u_{\sigma(m_{a,b})})
\end{equation}
for any formal series $G(u_1,\ldots,u_{m_{a,b}})$ of the parameters 
$u_1,\ldots,u_{m_{a,b}}$. The relations \r{SR-g} are called the {\sl Serre relations}.

The relations between generators of the algebra $\DXg$ can be obtained from 
the relations between currents \r{CuCoRe}
by equating the coefficients at all monomials $u^{\ell_1}\,v^{\ell_2}$
for all $\ell_1,\ell_2\in\ZZ$ in left and right hand sides of the 
equalities \r{CuCoRe}. For example, the relation for the currents 
\r{FF}  is equivalent to the relation
\begin{equation}\label{cr-m5}
\Big[F_a[\ell_1+1],F_b[\ell_2]\Big]-\Big[F_a[\ell_1],F_b[\ell_2+1]\Big]=-\ 
\fb_{a,b}\begin{cases}
2\,F_a[\ell_1]\,F_b[\ell_2],& a<b\,,\\
\Big\{F_a[\ell_1],F_b[\ell_2]\Big\},&a=b\,,
\end{cases}
\end{equation}
where $\{A,B\}=A\,B+B\,A$ is anticommutator. 

\begin{rem}
Yangian double used in this paper slightly differs from the algebra defined in \cite{KhT-DY} 
in the framework of the quantum double construction of the Yangian algebra $Y(\fg)$ 
introduced in \cite{Dnew}. In \cite{LPR-SR-CC} this difference was explained by the 
shifts of the formal generating parameters used to describe the currents \r{gen-ser-g}.
Due to this shifts the Yangian double as an associative algebra 
defined by the commutation relations \r{CuCoRe}
is isomorphic to the algebra investigated in \cite{KhT-DY}. 
We used this version of Yangian double because it is more adapted for applications
in quantum integrable models governed by the $R$-matrix \r{RABCD} given below. 
\end{rem}

Algebra $\DXg$ admits a $\ZZ$-filtration 
\begin{equation}\label{filtr}
\cdots \subset \DXg_{-\ell}\subset \cdots\subset  \DXg_{-1}\subset
 \DXg_{0}\subset  \DXg_{1}\subset\cdots  \DXg_{\ell}\subset\cdots \subset \DXg
\end{equation}
defined by degrees \r{degY}. In \r{filtr} we define $\DXg_{\ell}$ as the 
linear span of the elements from $\DXg$ with degree less or equal to 
 $\ell\in\ZZ$. Let $\overline\DXg$ be the corresponding formal 
completion of $\DXg$.

One can see from the defining relations \r{CuCoRe} for the algebra $\overline\DXg$ 
  that this algebra has two distinguished associative subalgebras.  
One subalgebra is formed by the generators 
$F_a[\ell_1]$,  $\ell_1\in\ZZ$, $\hC_a[\ell_2]$, $\ell_2\geq 0$, $a\in\PSR_\ggo$
which we denote as $Y_F$. The other subalgebra is built from the generators  
$E_a[\ell_1]$,  $\ell_1\in\ZZ$, $\hC_a[\ell_2]$, $\ell_2< 0$, $a\in\PSR_\ggo$.
We denote this subalgebra as $Y_E$. 
Using approach of the paper 
\cite{KhT-DY} and coalgebraic structure \cite{D-FTINT}  of the subalgebra  $Y_F$
known as the Drinfeld's coproduct 
one can build the Yangian double $\DYg$ as the quantum double \cite{D} of this 
subalgebra and prove that as associative algebra Yangian double $\DYg$ is isomorphic 
to $\overline\DXg$.

As it was argued in \cite{LPR-SR-CC}, to be compatible with the category of highest weight representations,  the subalgebras 
 $Y_F$ and $Y_E$ have to extended to the completed subalgebra $\oUF$ and $\overline{Y}_E$.
The extended algebra\footnote{
Note that in the present paper, the definition of the completed subalgebra $\oUF$ includes  the generators 
$\hC_{a_i}[\ell_i]$, $\ell_i\geq 0$, which is not the case in the paper \cite{LPR-SR-CC}.}
  $\oUF$
is  formed by linear combinations 
of series given as infinite sums of monomials 
$y_{a_1}[\ell_1]\cdots y_{a_k}[\ell_1]$ with $\ell_1\leq \cdots\leq \ell_k$
such that $\ell_1+\cdots+\ell_k$ is fixed, where $y_{a_i}[\ell_i]$ 
is either $F_{a_i}[\ell_i]$, $\ell_i\in\ZZ$ or $\hC_{a_i}[\ell_i]$, $\ell_i\geq 0$
and $a_i\in\PSR_\fg$. The extended algebra  $\overline{Y}_E$
is defined similarly. 

One can consider the algebra generated by the elements  
$F_a[\ell]$, $E_a[\ell]$, $\hC_a[\ell]$ for $\ell\geq 0$,  $a\in\PSR_\ggo$
satisfying the same commutation and Serre  relations which can be obtained from \r{CuCoRe}
after restriction them to non-negative values of the modes index $\ell\geq 0$. 
As an associative algebra it is 
isomorphic to the Yangian $Y(\fg)$ \cite{Dnew}. Using the coalgebraic structure of the 
Yangian $Y(\fg)$ one can construct its double   $\RDYg$ which will be 
 isomorphic to $\overline\DXg$ as an algebra.

As associative algebras, the Yangians doubles 
 $\DYg$ and  $\RDYg$ are isomorphic, but as Hopf algebras 
 they are different  (see section~\ref{cc-property}). The Yangian double  $\RDYg$
  has a $RTT$ realization which will be described in the next section.

\subsection{Relation to $\ggo$-invariant integrable models}
\label{int-mod}

The Yangian double  $\RDYg$ has a realization that is more adapted 
for the applications to  $\ggo$-invariant integrable models. 
This realization  relies on the $RTT$ commutation relations with a
$\ggo$-invariant $R$-matrix. It is defined as follows.  

Let  $\epsi_i$, $i\in\Ig$ be the integer parameters 
\begin{equation}\label{eps}
\epsi_{i}=(\epsi_\ggo)^{\delta_{i>0}}\,,
\end{equation}
where $\epsi_\ggo$ is defined by \r{pri-epsi} for different $\ggo\not=\gga$.  
For any matrix $M\in{\rm End}(\CC^{N})$ 
we denote by $M^{\rm t}$ the transposition using the map \r{prime}
and defined as
\begin{equation}\label{transpo}
\begin{aligned}
&\big(M^{\rm t}\big)_{i,j}=M_{j',i'},\qquad &\ggo=\gga\,,\\
&\big(M^{\rm t}\big)_{i,j}=
\epsi_{i}\,\epsi_{j}\,M_{j',i'},\qquad &\ggo\not=\gga\,.
\end{aligned}
\end{equation}

Let $R(u,v)$ be the $\mathfrak{g}$-invariant $R$-matrix  \cite{Y,ZZ79,AMR}
\begin{equation}\label{RABCD}
R(u,v) = \begin{cases}
  \mathbf{I}\otimes\mathbf{I} + g(u,v)\, \mathbf{P},&\fg=\gga\,,\\  
   \mathbf{I}\otimes\mathbf{I} + g(u,v)\, \mathbf{P} - 
  g(u+c\,\kappa_\ggo,v)\,\mathbf{Q},&\fg\not=\gga\,,
  \end{cases}
\end{equation}
where $\mathbf{I}=\sum_{i\in\Ig}\Ee_{i,i}$ is the identity operator acting in the space $\CC^{N}$ and
$\Ee_{i,j}\in{\rm End}(\CC^{N})$ are $N\times N$ 
matrices with the only nonzero entry equals to 1 at 
the intersection of the $i$-th row and $j$-th column.

The operators $\mathbf{P}$ 
and $\mathbf{Q}$ act in 
$\CC^{N}\ot\CC^{N}$ such that
\begin{equation}\label{PQ}
\mathbf{P}=\sum_{i,j\in \Ig}\Ee_{i,j}\otimes\Ee_{j,i}, \qquad 
\mathbf{Q}
=\sum_{i,j\in \Ig}\epsi_{i}\,\epsi_{j}\,\Ee_{i,j}\otimes\Ee_{i',j'}\,.
\end{equation}
Note that for $\ggo\not=\gga$ we have that $\mathbf{Q}=\mathbf{P}^{{\rm t}_2}
=\mathbf{P}^{{\rm t}_1}$.

The parameter $\kappa_\ggo$ entering the definition of the $\ggo$-invariant 
$R$-matrix \r{RABCD}  for $\fg\not=\gga$  is equal to 
\begin{equation}\label{kappa}
\kappa_\ggo=n+\theta_\ggo,\qquad \theta_\ggo=
\epsi_\ggo+(1-\frd_\ggo)/2=
\begin{cases}
-1/2,&\mathfrak{g}=\mathfrak{o}_{2n+1},\\
1&\mathfrak{g}=\mathfrak{sp}_{2n}\,,\\
-1,&\mathfrak{g}=\mathfrak{o}_{2n}\,.
\end{cases}
\end{equation}
We do not need to define 
$\theta_\ggo$ and correspondingly 
$\kappa_\ggo$ for $\ggo=\gga$
since  the
operator $\mathbf{Q}$ is absent in $R$-matrix \r{RABCD}.

The Yangian double  $\RDYg$ is described  
by two  $N\times N$  matrices $T^\pm(u)$  satisfying 
 the commutation relations 
 \begin{equation}\label{RTT}
  R(u,v)\cdot \left( T^{q_1}(u)\otimes\mathbf{I} \right) \cdot
  \left( \mathbf{I}\otimes T^{q_2}(v) \right) =
  \left( \mathbf{I}\otimes T^{q_2}(v) \right) \cdot
  \left( T^{q_1}(u)\otimes\mathbf{I} \right) \cdot R(u,v)\,,
\end{equation}
where $q_{1,2}=\pm$. 
The  matrices $T^\pm(u)$ are Laurent series with respect 
to the formal parameter $u$ of the form 
\begin{equation}\label{T-Laut}
T^+(u)=\mathbf{I}+\sum_{m\geq 0}T^+[m]\,(u/c)^{-m-1},\qquad
T^-(u)=\mathbf{I}+\sum_{m< 0}T^-[m]\,(u/c)^{-m-1}\,.
\end{equation}
The $R$-matrix \r{RABCD} depends on formal spectral parameters 
$u$, $v$ and the rational functions of these parameters entering 
the definition of $R(u,v)$ in $\r{RTT}$ 
should be understood as Laurent series with respect 
to the ratio $(u/v)^{\pm1}$ for $q_2=-q_1=\pm$. 
When $q_1=q_2$ the rational functions  can be  Laurent series  with respect 
to either the ratio $u/v$ or the ratio $v/u$.

In the context of generic $\ggo$-integrable models 
the matrix $T^+(u)$ is identified with a universal  monodromy matrix 
for the corresponding model.

The commutation relations for the 
monodromy matrix entries $T_{i,j}^\pm(u)$ defined 
by 
\begin{equation}\label{T-ent}
T^\pm(u)=\sum_{ i,j\in \Ig} \Ee_{ij} \, T^\pm_{i,j}(u)
\end{equation}
can be written as follows 
\begin{equation}\label{rtt}
\begin{split}
&  \left[ T^{q_1}_{i,j}(u), T^{q_2}_{k,l}(v) \right] =  
  g(u,v)\,\left( T^{q_2}_{k,j}(v)T^{q_1}_{i,l}(u) - 
  T^{q_1}_{k,j}(u) T^{q_2}_{i,l}(v) \right)\\
  &\quad+g(u+\kappa_\ggo,v)\,\sum_{p\in \Ig}\epsi_{p}\left(\delta_{k,i'}\,
  \epsi_{i} \, T^{q_1}_{p,j}(u)T^{q_2}_{p',l}(v)-
  \delta_{l,j'}\,\epsi_{j}\,  T^{q_2}_{k,p'}(v)T^{q_1}_{i,p}(u)\right).
  \end{split}
\end{equation}
In the definitions \r{T-Laut} and \r{T-ent} 
 the monodromy entries $T_{i,j}(u)$ are  Laurent series of the 
formal spectral parameter $u$:
\begin{equation}\label{T-eL}
T^+_{i,j}(u)=\delta_{i,j}+\sum_{m\geq 0}T^+_{i,j}[m]\,(u/c)^{-m-1},\qquad
T^-_{i,j}(u)=\delta_{i,j}+\sum_{m< 0}T^-_{i,j}[m]\,(u/c)^{-m-1}\,.
\end{equation}

The zero mode operators $\TT_{i,j}=T^+_{i,j}[0]$ plays an important role. 
They form an algebra 
\begin{equation*}
 \left[ \TT_{i,j}, \TT_{k,l} \right] =  
  \delta_{i,l}\, \TT_{k,j}- 
  \delta_{k,j}\, \TT_{i,l} + 
  \epsi_{i}\,\epsi_{j} \left(\delta_{k,i'}\,
 \TT_{j',l}-
  \delta_{l,j'}\,  \TT_{k,i'}\right)
\end{equation*} 
which, when adding the relations following from 
\r{sbcd1} or \r{q-det}, is isomorphic to the Lie algebra $\ggo$.  
The commutation relations of $\TT_{i,j}$ with the $T$-matrices
entries $T^\pm_{k,l}(v)$ are  
\begin{equation}\label{zm-Tij}
 \left[ \TT_{i,j}, T^{\pm}_{k,l}(v) \right] =  
  \delta_{i,l}\, T^{\pm}_{k,j}(v)- 
  \delta_{k,j}\, T^{\pm}_{i,l}(v) + 
  \epsi_{i}\,\epsi_{j} \left(\delta_{k,i'}\,
 T^{\pm}_{j',l}(v)-
  \delta_{l,j'}\,  T^{\pm}_{k,i'}(v)\right)\,.
\end{equation}

Using properties of the $R$-matrix \r{RABCD} one can also verify that 
the algebra  $\RDYg$ in its $R$-matrix realization  
possesses two  automorphisms 
\begin{equation}\label{2am}
T^\pm(u)\to \Big(\big(T^\pm(u)\big)^{-1}\Big)^{\rm t}\quad\mbox{and}\quad 
T^\pm(u)\to \Big(\big(T^\pm(u)\big)^{\rm t}\Big)^{-1}\,.
\end{equation}

For $\ggo=\goN$ and $\gsp$ there are
symmetry relations 
\begin{equation}\label{sbcd}
T^\pm(u-c\,\kappa_\ggo)^{\rm t}\cdot T^\pm(u)= 
\zcent^\pm(u)\ \mathbf{I},
\end{equation}
in the Yangian doubles  $\RDYg$ and the
two automorphisms \r{2am} differ only by a shift of the spectral parameter
and central elements $\zcent^\pm(u)$. 
In what follows we will fix these central elements  by the relation 
$\zcent^\pm(u)=1$ so that
the symmetry relations  simplify to 
\begin{equation}\label{sbcd1}
\Big(\big(T^\pm(u)\big)^{\rm t}\Big)^{-1}=T^\pm(u+c\,\kappa_\ggo)
\quad\mbox{or}\quad 
\Big(\big(T^\pm(u)\big)^{-1}\Big)^{\rm t}=T^\pm(u-c\,\kappa_\ggo)\,.
\end{equation}

\subsection{Ding-Frenkel isomorphism  between $\DYg$ and  $\RDYg$}

The isomorphism of the Yangian doubles 
 $\DYg$ and  $\RDYg$ as associative algebras can be 
observed in the framework of the Ding-Frenkel approach introduced 
in \cite{DF} for the quantum affine algebra $U_q(\gln)$. 
For the Yangian $Y(\ggo)$ this approach was developed in \cite{JLM18}
and for Yangian doubles $\DYg$ and  $\RDYg$ in \cite{LP1}. 

The starting point of the Ding-Frenkel construction is 
the Gaussian decomposition of the  $T$-operators 
for the Yangian double  $\RDYg$
associated with the vector representation of 
$\ggo$
\begin{equation}\label{Gauss}
T^\pm_{i,j}(u)=\sum_{\ell\,\in\,\Ig}
\FF^\pm_{\ell,i}(u)\,k^\pm_\ell(u)\,\EE^\pm_{j,\ell}(u)
\end{equation}
 such that 
\begin{equation*}
\FF^\pm_{i,i}(u)=\EE^\pm_{i,i}(u)=1,\qquad 
\FF^\pm_{j,i}(u)=\EE^\pm_{i,j}(u)=0,\qquad j<i,\qquad i,j\in \Ig\,,
\end{equation*}
where $\Ig$ is the set of indices defined by \r{I-g}.

Using the parameters \r{nus-all}
we can express the simple roots currents 
$F_a(u)$, $E_a(u)$ and $\hC^\pm_a(u)$ through
the Gaussian coordinates using the Ding-Frenkel type formulas \cite{BK, DF} 
\begin{equation}\label{sr-cur}
\begin{split}
F_a(u)&=\FF^+_{a+\nus_a,a}(u)-\FF^-_{a+\nus_a,a}(u)\,,\\
E_a(u)&=\EE^+_{a,a+\nus_a}(u)-\EE^-_{a,a+\nus_a}(u)\,,\\
\hC^\pm_a(u)&=\nusp_a\,k^\pm_a(u)\,k^\pm_{a+\nus_a}(u)^{-1}\,.
\end{split}
\end{equation}

The diagonal Gaussian coordinates $k^\pm_a(u)$  are algebraically dependent according 
to the relation 
\begin{equation}\label{q-det}
\mbox{qdet}\Big(T^\pm(u)\Big)=
\sum_{\sigma\in S_{n}}(-1)^{|\sigma|}\ T^\pm_{1,\sigma(1)}(u)\,
T^\pm_{2,\sigma(2)}(u-c)\cdots T^\pm_{n,\sigma(n)}(u-c(n-1))=1
\end{equation}
for $\ggo=\gga$ and to the relations \r{sbcd1} for $\ggo=\goN$, $\gsp$. 
Using the Gaussian decomposition \r{Gauss} one can show that 
these relations are equivalent to 
\begin{equation}\label{sln-cond}
k^\pm_{1}(u)\,k^\pm_2(u-c)\cdots k^\pm_{n-1}(u-c(n-2))\,k^\pm_{n}(u-c(n-1))=1
\end{equation}
for $\ggo=\gga$ and for $\ggo\not=\gga$ to the relations 
\begin{equation}\label{cond-g}
\begin{aligned}
k^\pm_0(u+c/2)\,k^\pm_0(u)&=
\prod_{s=1}^n k_s^\pm(u-c(s-3/2))\,k_s^\pm(u-c(s-1/2))^{-1},\quad
&&\ggo=\ggb\,,\\
k^\pm_0(u)\,k^\pm_1(u-2\,c)&=
\prod_{s=2}^n k_s^\pm(u-c\,s)\,k_s^\pm(u-c(s+1))^{-1},\quad
&&\ggo=\gsp\,,\\
k^\pm_0(u)\,k^\pm_1(u)&=
\prod_{s=2}^n k_s^\pm(u-c\,(s-2))\,k_s^\pm(u-c(s-1))^{-1},\quad
&&\ggo=\ggd\,.
\end{aligned}
\end{equation}

When the relation 
\r{sln-cond} is not imposed while all other relations remain, one gets
 the Yangian double  $\RDYAn$. On the other hand 
 the Yangian double  $\RDYSn$ can be identified with the quotient of the Yangian double 
  $\RDYAn$ by the relation \r{q-det} in the $RTT$ formulation or by 
 the relation \r{sln-cond} in the Drinfeld's 'new' realization. 

The Ding-Frenkel formulas \r{sr-cur} display a map from the generators of 
the Yangian double  $\RDYg$ in its $R$-matrix realization to the 
Yangian double $\DYg$ in its 'new' realization \cite{Dnew}. To describe the 
inverse map we need to introduce several elements related to the projection 
method. The full description of this method for quantum affine algebras 
can be found in \cite{EKhP}. For the Yangian double $\DYg$ a partial description
can be found in \cite{HLPRS17,LP1}.

\subsection{Elements of the projection method}\label{cc-property}

Let  $\overline{Y}_F$ be the completed subalgebra in the Yangian double 
$\DYg$ generated by the currents $F_a(u)$ and $\hC^+_a(u)$
as it was defined in the section~\ref{sect21}. 
Recall that we consider any product of  simple root currents 
in the category of  highest weight representations which means 
that it belongs to the 
completed subalgebra $\overline{Y}_F$ in the Yangian double 
$\DYg$ and is normal ordered. The normal ordering of product of 
 currents is a presentation of the product 
$F_{a_1}(z_1)\cdots F_{a_m}(z_m)$ in the form 
when all currents $F_{a_s}(z_s)$, $s=1,\ldots,m$ are 
replaced by the difference of the Gaussian coordinates 
$\FF^+_{a_s+\nus_{a_s},a_s}(z_s)-\FF^-_{a_s+\nus_{a_s},a_s}(z_s)$ according 
to the Ding-Frenkel formulas \r{sr-cur} and then all 'negative' 
Gaussian coordinates are moved to the left of all 'positive' Gaussian coordinate using the commutation relations \r{RTT}
in the  Yangian double $\RDYg$. 

As it is seen from the $RTT$ commutation relations \r{RTT},  
the Yangian double $\RDYg$  has two 
 Borel subalgebras 
$\sB^\pm$ generated by the modes of the  $T$-operators 
$T^\pm(u)$. It was already mentioned that the subalgebra $\sB^+$ 
is isomorphic to the Yangian $Y(\fg)$. 
Alternatively, the  Borel subalgebras 
$\sB^\pm$ can be seen as generated by the algebraically dependent 
Gaussian coordinates $\FF^\pm_{j,i}(u)$, 
$\EE^\pm_{i,j}(u)$, and $k^\pm_j(u)$ for $i<j$ and $i,j\in\Ig$. 
The equivalence between these two descriptions 
follows from the invertibility of the  
Gaussian decomposition  and the fact that 
the Gaussian coordinates can be uniquely expressed through 
quasi-minors \cite{GeRe} of the $T$-operators $T^\pm(u)$ \cite{BK,M}.

The Yangian doubles $\DYg$ and $\RDYg$ are isomorphic as 
associative algebras and the Ding-Frenkel formulas \r{sr-cur} 
describe this isomorphism.  
Due to this isomorphism, the
subalgebras $\sB^\pm\in\RDYg$ and $\oUF\in\DYg$ 
have non-empty intersections
\begin{equation}\label{inter}
U^\pm_F=\sB^\pm\cap  \oUF\,.
\end{equation}
It follows from the definition of the Borel subalgebras that 
one can consider the intersection $U^-_F$ as the linear span 
of monomials composed from the Gaussian coordinates $\FF^-_{j,i}(u)$. They are algebraically dependent due to 
the symmetry relations \r{sbcd1}.
Similarly, the intersection $U^+_F$ is viewed as the linear span 
of monomials composed from the 
Gaussian coordinates $\FF^+_{j,i}(u)$
and  $k^+_{j}(u)$. Here the Gaussian coordinates 
$\FF^\pm_{j,i}(u)$, $i<j$, $i,j\in\Ig$ can be related 
to the simple root currents $F_a(u)$, $a\in\PSR_\fg$ by the projections \r{proj} 
on the intersections \r{inter} 
 and acting on the composed currents. The composed 
currents are defined as  products 
of  simple root currents and are given by the formulas 
\r{ccABC} and \r{ccD} below. It was proved in \cite{LPR-SR-CC} that they 
are well defined elements of the completed subalgebra  $\oUF$ and their 
properties can be inherited from the Serre relations.

The commutation relations \r{CuCoRe} 
in the Drinfeld's realization of $\DYg$ or the commutation relation 
\r{RTT} in the $R$-matrix realization of  $\RDYg$ allows 
to present an arbitrary element  $\cF\in \oUF$ in the normal ordered form 
\begin{equation}\label{nor-ord}
\cF=\cF^-\,\cF^+,\qquad \cF^\pm\in U^\pm_F\,.
\end{equation}

The subalgebras $\sB^\pm$  
are Hopf subalgebras in the Yangian double  $\RDYg$ equipped with the coproduct
\begin{equation}\label{st-cop}
\Delta\Big(T^\pm_{i,j}(u)\Big)=\sum_{\ell\in\Ig}
T^\pm_{\ell,j}(u)\ot T^\pm_{i,\ell}(u),\qquad \varepsilon(T^\pm_{i,j}(u))=\delta_{i,j}\,,
\end{equation}
 where $\coun$ is the counit map.   
 The subalgebra  $\oUF$
 is a Hopf subalgebra in the Yangian double  $\DYg$
  equipped with Drinfeld's coproduct \cite{D-FTINT} 
\begin{equation}\label{D-cop}
\begin{aligned}
\DeltaD\Big(F_a(u)\Big)&=1\ot F_a(u)+F_a(u)\ot\hC^+_a(u),\qquad 
&\counD(F_a(u))&=0\,,\\
\DeltaD\Big(\hC^+_a(u)\Big)&=\nusp_a^{-1}\,\hC^+_a(u)\ot\hC^+_a(u),\qquad
&\counD(\hC^+_a(u))&=\nusp_a\,.
\end{aligned}
\end{equation}

In analogy with 
the general theory of the projection method 
presented in \cite{EKhP} for the quantum affine algebras, one can define 
projections onto the intersections \r{inter} for any element  $\cF\in \oUF$
written in the normal ordered form $\cF=\cF^-\,\cF^+$ as follows
\begin{equation}\label{proj}
\Pfp\Big(\cF^-\,\cF^+\Big)=\counD(\cF^-)\,\cF^+,\qquad 
\Pfm\Big(\cF^-\,\cF^+\Big)=\counD(\cF^+)\,\cF^-\,.
\end{equation}

The  projections \r{proj} yield a power tool 
to investigate the algebraic nested Bethe ansatz. For example, in \cite{KhP},
 exact formulas for off-shell Bethe vectors were calculated 
using these projections in a generic integrable model related to the quantum affine 
algebra $U_q(\widehat{\mathfrak{gl}}_n)$.
One of the main property of these projections is the possibility 
to write explicitly an arbitrary element  $\cF\in \oUF$ in its normal ordered 
form using the Drinfeld's comultiplication map  \r{D-cop}
\begin{equation}\label{nor-form}
\cF=\sum \Pfm\Big(\cF^{(2)}\Big)\,  \Pfp\Big(\cF^{(1)}\Big),\qquad
\DeltaD(\cF)=\sum \cF^{(1)}\otimes \cF^{(2)}\,.
\end{equation}

For  $\ggo\not=\gga$, let $\bvphi_\ggo$ be the parameter 
\begin{equation}\label{vphi}
\bvphi_{\mathfrak{g}}=2\,\xi_\ggo+(\epsilon_\ggo-1)/2=
\begin{cases}
0,&\ggo=\ggb\,,\\
1,&\ggo=\gsp\,,\\
2,&\ggo=\ggd\,.
\end{cases}
\end{equation}
For $\ggo\not=\gga$ and 
$s\in\PSR_\ggo$
let $z_s$ be the shifted parameters
\begin{equation}\label{z-sh1}
z_s=z-c(s+\theta_\ggo),\qquad \bvphi_\ggo\leq s\leq n-1\,,
\end{equation} 
where $\theta_\ggo$ is defined in \eqref{kappa}.

Besides the simple root currents $F_a(z)$ for $\bvphi_\ggo\leq a\leq n-1$ 
(for $\fg\not=\gga$)
given by the Ding-Frenkel formulas \r{sr-cur}, we can introduce {\sl auxiliary} 
simple root currents  \cite{LP1}  by the equality 
\begin{equation}\label{dep-cur}
F_{a'-1}(z)=-\ F_a(z_a),\qquad \bvphi_\ggo\leq a\leq n-1\,.
\end{equation}

Combining simple roots and (for $\ggo\not=\gga$) auxiliary simple roots currents 
we can define 
the {\sl composed} currents $F_{j,i}(u)$ as elements of the completed 
subalgebra  $\oUF\subset \DYg$. 
For the algebras $\ggo=\gga$, $\ggb$, and $\ggc$ they read
\begin{equation}\label{ccABC}
F_{j,i}(u)=F_i(u)\, F_{i+1}(u)\cdots F_{j-2}(u)\, F_{j-1}(u)\,,
\qquad n'\leq i<j\leq n
\end{equation} 
while for the algebra $\ggo=\ggd$ the composed currents can be presented as 

\begin{equation}\label{ccD}
F_{j,i}(u)=\begin{cases}
F_i(u)\, F_{i+1}(u)\cdots  F_{j-1}(u),\quad 2\leq i<j\leq n
\quad \mbox{or} 
\quad
2\leq j'<i'\leq n\,,\\
-\ F_{i}(u)\cdots F_{-2}(u)\, F_{j'}(u),\quad 2\leq i'\leq n,\quad j=0,1\,,\\
0,\quad i'=j=1\,,\\
 F_{i}(u)\, F_{2}(u)\cdots F_{j-1}(u),\quad i=0,1,\quad 2\leq j\leq n\,,\\
-\ \Big(F_i(u)\cdots F_{-2}(u)\Big)\, F_0(u)\, F_1(u)\, \Big(F_2(u)\cdots F_{j-1}(u)\Big),\quad
2\leq i',j\leq n\,.
\end{cases}
\end{equation}
According to the last line in \r{ccD} for 
$i'=j=2$, the composed current $F_{2,-1}(u)=-F_0(u)\,F_1(u)$, where 
$F_0(u)$ and $F_1(u)$ are commuting currents. 

Composed currents were introduced in \cite{DKh} for  quantum affine 
algebras where their correctness within the category 
of  highest degree representations was established. 
The following proposition was proved in \cite{LP1}.

\begin{prop}\label{inv-DF}
\cite{LP1} 
The Gaussian coordinates $\FF^\pm_{j,i}(u)$, $n'\leq i< j\leq n$ for the 
$T$-operator of the Yangian double $\DYg$ 
associated to the vector 
(first fundamental) representation of $\ggo$ are related to the composed 
currents $F_{j,i}(u)$ via the projections \r{proj}:
\begin{equation}\label{DF-in-m}
\FF^+_{j,i}(u)=\Pfp\Big(F_{j,i}(u)\Big),\qquad 
\tFF^-_{j,i}(u)=\Pfm\Big(F_{j,i}(u)\Big)\,,
\end{equation} 
where 
\begin{equation}\label{inv-F}
\tFF^-_{j,i}(u)=\sum_{\ell=0}^{j-i-1}(-1)^{\ell+1}
\sum_{j>i_\ell>\cdots>i_1>i}
\FF^-_{i_1,i}(u)\,\FF^-_{i_2,i_1}(u)\cdots \FF^-_{i_\ell,i_{\ell-1}}(u)\,
\FF^-_{j,i_{\ell}}(u)\,.
\end{equation}
\end{prop}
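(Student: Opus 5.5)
The plan is to argue by induction on the length $j-i$ of the composed current. The tool throughout is the normal ordering formula \r{nor-form} combined with the multiplicative structure of the Drinfeld coproduct \r{D-cop}.

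\emph{Base case $j-i=1$.} Here $F_{i+1,i}(u)=F_i(u)$ is a simple root current. In the $\ggd$ case with $i=0$, or for auxiliary roots, one uses \r{dep-cur} together with the symmetry \r{sbcd1} and the Gaussian decomposition. By \r{sr-cur} we have $F_i(u)=\FF^+_{i+1,i}(u)-\FF^-_{i+1,i}(u)$, which is already normal ordered: $\FF^+\in U^+_F$ and $\FF^-\in U^-_F$. Applying \r{proj} then gives $\Pfp(F_i)=\FF^+_{i+1,i}$ and $\Pfm(F_i)=-\FF^-_{i+1,i}$. This agrees with \r{inv-F} for $\ell=0$.

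\emph{Induction step.} Write $F_{j,i}(u)=F_{j-1,i}(u)\,F_{j-1}(u)$, with the obvious modifications for the $\ggd$ cases in \r{ccD}. The product of currents at coinciding points is well defined in $\oUF$ by the results of \cite{LPR-SR-CC}. Its Drinfeld coproduct is computed from \r{D-cop}; schematically,
\[
\DeltaD\big(F_{j,i}(u)\big)=\sum_{i\le \ell\le j} F_{\ell,i}(u)\otimes \big(\text{terms of type } k\text{'s}\big)\,F_{j,\ell}(u)\,,
\]
where $F_{i,i}=1$, after the Cartan currents $\hC^+$ are commuted through using \r{KF}. The regularity of $f(u,u)$-type factors has to be checked at this point. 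Inserting this into \r{nor-form} and applying $\Pfp$ gives a recursion expressing $\Pfp(F_{j,i})$ through $\Pfp(F_{\ell,i})$ and $\Pfm(F_{j,\ell})$. Matching it with the Gauss-decomposition identities requires one further input: the relation $T^+=\FF^+k^+\EE^+$, together with the RTT relations \r{rtt} between $\FF^+$ and $T^-$ entries, shows that the $\FF^\pm$ satisfy the same triangular recursion. More concretely, I would verify instead that the element $\FF^+_{j,i}(u)+\tFF^-_{j,i}(u)$ is normal ordered and equal to $F_{j,i}(u)$. Given that, uniqueness of the normal ordered decomposition and \r{proj} give both statements at once, since $\counD$ kills all $\FF^\pm_{j,i}$ with $j>i$.

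\emph{Key identity.} The core is therefore the claim
\[
F_{j,i}(u)=\FF^+_{j,i}(u)+\tFF^-_{j,i}(u),
\]
where $\tFF^-$ is the $(j,i)$ entry of the inverse of the unipotent lower-triangular matrix $\FF^-(u)$, up to sign conventions. This is what the alternating sum \r{inv-F} expresses. I would prove it inductively by multiplying the inductive hypothesis $F_{j-1,i}=\FF^+_{j-1,i}+\tFF^-_{j-1,i}$ by $F_{j-1}=\FF^+_{j,j-1}-\FF^-_{j,j-1}$ and reordering. The cross term $\FF^+_{j-1,i}(u)\FF^-_{j,j-1}(u)$ is the hard part. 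One must show that the RTT relations at coinciding spectral parameters, extracted from \r{rtt} via the Gaussian decomposition, turn this product into $\FF^-$-ordered terms plus $\FF^+_{j,i}-\FF^+_{j-1,i}\FF^+_{j,j-1}$. This reproduces the $\FF^+_{j,i}$ term and the extra summands of $\tFF^-$.

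\emph{Main obstacle.} Handling these coinciding-point products is the main difficulty. They are the source of the need for the completion $\oUF$, and of the non-simply-laced modifications: the auxiliary currents \r{dep-cur} with the shifts \r{z-sh1}, and the sign and vanishing cases in \r{ccD} for $\ggd$. For the latter I expect to use the symmetry relations \r{sbcd1} to reduce indices $i<1$ to the "$\gln$-like" part $1\le i<j\le n$, where the argument is that of \cite{DF}.
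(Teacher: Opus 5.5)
\textbf{The gap: your key identity is false.} The step that fails is $F_{j,i}(u)=\FF^+_{j,i}(u)+\tFF^-_{j,i}(u)$. It is false as soon as $j-i\geq 2$, so no reordering of the cross term can produce it.

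Take $j=i+2$ with $1\leq i$ and $i+2\leq n$. In your own expansion of $F_i(u)F_{i+1}(v)$ via \eqref{sr-cur}, the summand $-\FF^-_{i+1,i}(u)\,\FF^+_{i+2,i+1}(v)$ is already normal ordered. Nothing cancels it:
\begin{itemize}
\item Normal ordering respects the weight grading of $U^-_F$ and $U^+_F$ separately.
\item By \eqref{nor-form}, the component of $F_i(u)F_{i+1}(v)$ whose $U^-_F$-factor has the weight of $F_i$ comes only from the term $F_{i+1}(v)\otimes F_i(u)\hC^+_{i+1}(v)$ of $\DeltaD$.
\item That component equals $\Pfm\big(F_i(u)\hC^+_{i+1}(v)\big)\,\Pfp\big(F_{i+1}(v)\big)=-\FF^-_{i+1,i}(u)\,\FF^+_{i+2,i+1}(v)=\tFF^-_{i+1,i}(u)\,\FF^+_{i+2,i+1}(v)$. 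It stays nonzero at $v=u$.
\end{itemize}

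Your schematic coproduct also has the cross term on the wrong side. For $j=i+2$ your term is $F_i(u)\otimes \hC^+_i(u)F_{i+1}(v)$. By \eqref{KF}, $\hC^+_i(u)F_{i+1}(v)=-\frac{u-v}{u-v-c}F_{i+1}(v)\hC^+_i(u)$, which vanishes at $v=u$.

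So the normal form of $F_{j,i}$ genuinely contains mixed terms. This is harmless for the proposition: by \eqref{proj}, both $\Pfp$ and $\Pfm$ annihilate $(U^-_F\cap{\rm Ker}\,\counD)(U^+_F\cap{\rm Ker}\,\counD)$. Your uniqueness argument therefore survives if you weaken the target to $F_{j,i}\equiv\FF^+_{j,i}+\tFF^-_{j,i}$ modulo this subspace. You must then actually determine the two pure components, and the cancellation you sketch for $\FF^+_{j-1,i}\FF^-_{j,j-1}$ is aimed at the wrong target.

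\textbf{The unresolved core.} Determining those pure components is the real content of the proposition, and the proposal leaves it open. The exchange relations between $T^+(u)$ and $T^-(v)$ from \eqref{rtt} have coefficients such as $g(u,v)$, which are formal series in $v/u$. Reordering term by term and then setting $v=u$ is therefore not meaningful. Only the complete product has a meaning at coinciding points, and that is what \cite{LPR-SR-CC} supplies. You need either that analytic framework or a characterization of $F_{j,i}$ that avoids coinciding points, and you give neither.

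The non-$\gln$ cases also do not reduce to the block $1\leq i<j\leq n$:
\begin{itemize}
\item \eqref{sbcd1} relates $T$ to the transpose of its inverse, whose entries do not commute. Relating $\FF^\pm_{a',a'-1}$ to $\FF^\pm_{a+1,a}$ with the shifts of \eqref{dep-cur} is a genuine computation, not a relabelling.
\item The map $(i,j)\mapsto(j',i')$ sends entries crossing the middle of $\Ig$ to entries of the same kind. This includes $\FF^+_{n,n'}$, the one actually used in \eqref{CC3}.
\item For $\ggd$ your recursion $F_{j,i}=F_{j-1,i}F_{j-1}$ breaks down, because $F_0$ joins $0$ to $2$ and $F_{1,0}=0$. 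For instance $F_{1,-1}=-F_0$ by \eqref{ccD}, not $F_{0,-1}F_0=-F_1F_0$.
\end{itemize}
The paper does not prove this proposition; it quotes it from earlier work, so the text offers nothing to fill these steps.
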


Similar proposition can be formulated for the relation between Gaussian 
coordinates $\EE^\pm_{i,j}(u)$ and the composed currents $E_{i,j}(u)$, $i<j$,
$i,j\in\Ig$. We do not need this relation in this paper.

\section{Off-shell Bethe vectors}\label{BVsect}

We wish to investigate the properties of the
Bethe vectors in a generic $\ggo$-invariant integrable model studied within the framework of the universal nested algebraic Bethe ansatz. 
For such a purpose, we define
the universal monodromy matrix of the model as the $
T$-operator $T(u)\equiv T^+(u)$
 of the Yangian double  $\RDYg$. 
 This is possible because the monodromy matrix in any 
 $\ggo$-invariant integrable model satisfies the same commutation relations 
 \r{RTT} as $T^+(u)$.

We consider generalized models, defined by a vacuum $|0\rangle$, a set of  functional parameters $\lambda_i(u)$ and the relations
\begin{equation}\label{rvec}
\Big(T_{i,j}(u)-\lambda_i(u)\,\delta_{i,j}\Big)\rvac =0\,,\quad i\geq j\,,
\quad i,j\in\Ig
\end{equation}
valid for any formal parameter $u$. It is known that all periodic spin chain models fall in this class of models. The space of 
 states in these models is  the set $\cV$ of polynomials in $T_{i,j}(t_k)$, $i<j$ acting on the vacuum $\rvac$
  \begin{equation}\label{bvgr}
\BB(\bt)=\mathcal{P}(T_{i< j}(\bt))\ \rvac\,.
\end{equation}
In \eqref{bvgr}, $\bt=\{\bt^s\}$,
$s\in\PSR_\ggo$, where $\bt^s=(t^s_1,\ldots,t^s_{|\bt^s|})$ are sets of  Bethe parameters, with cardinality 
$|\bt^s|\geq 0$, and 
$s$ is called the color of the Bethe parameters.
The coefficients of these polynomials are rational functions of the 
Bethe parameters 
and of the functions $\lambda_i(z)$.  
\begin{Def}\label{def:offBV}
    We call {\sl off-shell Bethe vectors}, the vectors 
\r{bvgr} when they are symmetric 
with respect to any permutation of the Bethe parameters of the 
same color and if they satisfy the conditions 
 formulated below in propositions~\ref{BVdef3} and \ref{BVdef1}.
 It was shown in \cite{LPR-RR} that, for a given choice of Bethe parameters $\bar t$,
 these vectors are unique. 
\end{Def}

The functions $\lambda_i(u)$ for $n'\leq i\leq n$ are functional parameters of a generic $\ggo$-invariant 
integrable model and they are 
algebraically  dependent due to the relations (the shifted parameters $z_s$
are defined in \r{z-sh1})
\begin{equation}\label{lam-con-g}
\begin{split}
\lambda_1(z)\,\lambda_2(z-c)\cdots 
\lambda_{n}(z-c(n-1))&=1,
\qquad \ggo=\gga\,,\\
\lambda_{a'}(z)\,\lambda_a(z_a)
\prod_{s=a+1}^n \frac{\lambda_s(z_{s})}{\lambda_s(z_{s-1})}&=1,\qquad
\ggo\not=\gga\,,\qquad a\in\PSR_\ggo\cup\{n\}\,.
\end{split}
\end{equation}

In what follows the scalar functions $f(\bu,\bv)$ 
or the commuting operators $\mathcal{O}(\bu)$ depending on the sets 
$\bu=\{u_1,\ldots,u_{|\bu|}\}$ and 
$\bv=\{v_1,\ldots,v_{|\bv|}\}$ will be always understood as the products 
\begin{equation}\label{conv}
f(\bu,\bv)=\prod_{u_i\in\bu}\prod_{v_j\in\bv} f(u_i,v_j),\qquad 
\mathcal{O}(\bu)=\prod_{u_i\in\bu}\mathcal{O}(u_i)\,.
\end{equation}
We set such products equal to 1 if any of the sets $\bu$ and/or $\bv$ is empty.

In the framework of the projection method, the off-shell Bethe vectors $\BB(\bt)$  can be constructed from the generators of the Yangian double $\DYg$ in its Drinfeld's 'new' realization
using projection $\Pfp$ defined by \r{proj}.

If the Yangian $Y(\fg)$ is realized by the Gaussian coordinates \r{Gauss}
of the monodromy matrix $T(u)$, then one can translate the properties 
of the vacuum vector $\rvac$ \r{rvec} to the properties 
\begin{equation}\label{rvec1}
\EE^+_{i,j}(u)\rvac=0,\quad k^+_j(u)\rvac=\lambda_j(u)\rvac,\quad i,j\in\Ig\,.
\end{equation}
The space of states $\cV$ of $\fg$-invariant integrable model
will be the set of vectors $\Bf(\bt)$ defined by polynomials 
in $\FF^+_{j,i}(t^s_k)$ acting on the vacuum vector  
\begin{equation}\label{bvgrF}
\Bf(\bt)=\mathcal{P}'(\FF^+_{j,i}(\bt))\rvac\,.
\end{equation}
To find the relation between the polynomials $\mathcal{P}'$ in \r{rvec1} and 
$\mathcal{P}$ in \r{bvgr}
one has first to  substitute Gaussian decomposition \eqref{Gauss} of monodromy entries 
$T_{i,j}(t^s_k)$ in definition \r{bvgr}. Then, one has to use the commutation relations
between Gaussian coordinates to move all $\EE^+_{i,j}(t^s_k)$ and $k^+_j(t^s_k)$ 
to the right in each monomial, and finally to apply 
the property \r{rvec1} of the vacuum vector $\rvac$. 
Let us consider an example  given by the following definition.

\begin{Def}\label{BV-proj}
Let $\Bf(\bt)\in \cV$ be the vector defined by 
\begin{equation}\label{BVg}
\Bf(\bt)=
\left(\prod_{s\in\PSR_{\ggo}}
\frac{1}{g(\bar t^{\,s+\nu_s},\bar t^{\,s})}
\prod_{\ell_1>\ell_2}^{|\bt^s|}\bgam_s(t^s_{\ell_1},t^s_{\ell_2})\right)\,
\Pfp\left(\prod_{s\in\PSR_{\ggo}}^{\longleftarrow}
\prod_{1\leq \ell\leq |\bt^s|} ^{\longleftarrow}F_s(t^s_\ell)
\right)\rvac\,,
\end{equation}
where the function $\bgam_s(u,v)$ is defined by \r{g-gen} and  
the ordered product of  non-commutative entries $A_s$
is defined as follows
\begin{equation*}
\prod_{s\in \PSR_\fg}^{\longleftarrow}A_s=
A_{n-1}\, A_{n-2}\cdots A_2\, A_1\,A_0\,.
\end{equation*}
$\Bf(\bt)$ is a polynomial in the Gaussian coordinates 
$\FF^+_{j,i}(t^s_\ell)$ and has the form given in \r{bvgrF}.
 \end{Def}
 
 The fact that projection of the product of the currents 
 produces a polynomial in the Gaussian coordinates $\FF^+_{j,i}(t^s_\ell)$ 
 can be verified by the explicit calculation of this projection (see the technique of such calculations in \cite{HLPRS17}
 for the case of supersymmetric algebra $\fg=\fg\fl(m|n)$).
 The main result of this paper is following theorem 
 \begin{thm}\label{main-thm}
  The vector $\Bf(\bt)$ introduced in \r{BVg}   
 coincides with the off-shell Bethe vector $\BB(\bt)$ 
 given by definition~\ref{def:offBV}:
 \begin{equation}\label{main-st}
 \Bf(\bt)=\BB(\bt)\,.
 \end{equation}
 \end{thm}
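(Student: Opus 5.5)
Because Definition~\ref{def:offBV} determines off-shell Bethe vectors uniquely (the uniqueness result of \cite{LPR-RR}), it suffices to prove three things about $\Bf(\bt)$ from \r{BVg}. First, it is symmetric in the Bethe parameters of each color. Second, it satisfies the conditions of the two propositions referred to in Definition~\ref{def:offBV}: the ``action'' recursion of proposition~\ref{BVdef3} and the ``coproduct/normalization'' property of proposition~\ref{BVdef1}. Third, it belongs to $\cV$ and is a polynomial in the $T_{i<j}$.

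The third point follows from Proposition~\ref{inv-DF}. After normal ordering, $\Pfp$ of the product of currents is a polynomial in $\FF^+_{j,i}(t^s_\ell)$. By the Gaussian decomposition \r{Gauss} and the vacuum properties \r{rvec1}, this can be rewritten as a polynomial in the $T_{i,j}$ with $i<j$ acting on $\rvac$.

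For symmetry, I would use the exchange relation \r{FF} with $a=b=s$: $h_s(u,v)F_s(u)F_s(v)=-h_s(v,u)F_s(v)F_s(u)$. Permuting two same-color currents therefore multiplies the product by $-h_s(v,u)/h_s(u,v)$. The prefactor $\prod_{\ell_1>\ell_2}\bgam_s(t^s_{\ell_1},t^s_{\ell_2})$ is chosen so that this ratio is compensated. Indeed, $\bgam_s(u,v)/\bgam_s(v,u)=f_s(u,v)/f_s(v,u)=-h_s(u,v)/h_s(v,u)$, since the factors $h^{\zeta_s}$ are symmetric. The projection $\Pfp$ is linear, so symmetry holds before normal ordering and passes through $\Pfp$.

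Next comes the coproduct-type property, presumably the second defining condition: the behaviour of $\Bf$ under the standard coproduct \r{st-cop}, or equivalently the composite-model formula. I would derive it from the Drinfeld coproduct \r{D-cop} of the ordered product of currents. I would then use the known relation between $\Delta$ and $\DeltaD$ on $U^+_F$, namely that they agree modulo a twist that the projection kills. This gives a splitting of $\Pfp(\prod F)$ as a sum over partitions $\bt=\bt_{\so}\cup\bt_{\st}$. The weights $\hC^+_s$ acting on the first factor's vacuum produce the ratios $\lambda_s/\lambda_{s+\nu_s}$ and the $f_s$ factors, while the prefactors $g(\bt^{s+\nu_s},\bt^s)^{-1}$ reproduce the normalization of \cite{LPR-RR}.

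The action recursion, i.e.\ the action of $T_{i,j}(z)$ with $i\geq j$ (or of the zero modes $\TT_{i,j}$ via \r{zm-Tij}), I would obtain by computing $\EE^+$ and $k^+$ acting on the projected product. Concretely, I would write $E_a(z)$ as $\EE^+-\EE^-$, commute it through the product of $F$-currents using \r{EF}, and use $\Pfp$ to discard terms in $U^-_F$. This produces delta-function terms $c\,\delta(z,t^a_\ell)(\hC^+_a-\hC^-_a)$. The $\hC^-$ part is killed by the projection, and the $\hC^+$ part acting on the vacuum gives $\lambda_a(t)/\lambda_{a+\nu_a}(t)$ times the vector with one parameter removed.

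The main obstacle I expect is the normalization. The prefactor $g(\bt^{s+\nu_s},\bt^s)^{-1}$ must cancel the zeros forced by the Serre relations \r{SR-g}. These zeros appear when products of currents of adjacent colors are evaluated at coinciding or shifted arguments, notably for the auxiliary currents \r{dep-cur} in types $B,C,D$, and especially for the double bond at node $0$ in $\ggb,\gsp$. Handling this requires the composed-current properties from \cite{LPR-SR-CC} to show that $\Pfp$ of the product is divisible by $g(\bt^{s+\nu_s},\bt^s)$. I would first verify this divisibility and the resulting residue formula in rank two (the $\fs\fl_3$ example in the appendix, then $\mathfrak{o}_5$ and $\mathfrak{sp}_4$), and then extend it by induction on rank using the coproduct splitting.
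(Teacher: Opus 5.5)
Your global architecture matches the paper's: symmetry from \r{FF}, then the two defining relations, then uniqueness via the rectangular recurrence of \cite{LPR-RR} with $\Bf(\varnothing)=\rvac$. Your symmetry check through $\bgam_s(u,v)/\bgam_s(v,u)=-h_s(u,v)/h_s(v,u)$ is correct. The gap is that you misidentify the two defining relations, and as a result the central one is never proved.

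Proposition~\ref{BVdef1} is not a coproduct property. It is the zero-mode action \r{bvde1} of $\Tzm_a=\TT_{a+\nu_a,a}=E_a[0]$. Proposition~\ref{BVdef3} is not an action of $T_{i,j}(z)$ with $i\geq j$. It is the action \r{T-act-g} of the highest, creation-type entry, $T_{n',n}(z)\Bf(\bt)=\lambda_n(z)\,\mu^n_{n'}(z;\bt)\,\Bf(\bw)$. This adjoins $z$ (and, for $\fg\neq\gga$, the shifted points $z_s$ with $s\geq\bvphi_\fg$) to the sets of Bethe parameters, with a prescribed coefficient.

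Nothing in your plan produces \r{T-act-g}. The coproduct formula you intend to prove is Proposition~\ref{cop-pro}, which is a consequence of the definition and not a replacement for it. So the uniqueness result of \cite{LPR-RR} cannot be applied. The missing argument has four ingredients:
\begin{enumerate}
\item the factorization $T_{n',n}(z)=\FF^+_{n,n'}(z)\,k^+_n(z)$;
\item Proposition~\ref{inv-DF}, which writes $\FF^+_{n,n'}(z)$ as $\Pfp$ of a composed current built from simple and auxiliary currents at $z$ and at the $z_s$;
\item Lemma~\ref{Tac-van}: $T_{n',n}(z)$ preserves the left ideal $\IF$, which gives $T_{n',n}(z)\,\Pfp(\BF(\bt))=\Pfp\big(T_{n',n}(z)\,\BF(\bt)\big)$;
\item the reordering of currents with separated arguments $y^a,x^s$, followed by the check that the normalization ratio \r{last-ver} is regular at $y^a=x^s=z$ and yields $\mu^n_{n'}(z;\bt)$.
\end{enumerate}

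Your sketch of the zero-mode action is also flawed. $E_a[0]$ does not commute with $\Pfp$, so you cannot move it through the unprojected product and simply drop the $\hC^-_a$ contributions. The paper expands $\Pfp(\BF(\bt))$ as $\BF(\bt)$, plus the terms $\nusp_a^{-1}\Omr_a\,\FF^-_{a+\nu_a,a}(\bt^a_{\so})\,\Pfp(\BF(\bt_{\st}))$, plus an element of $\IF$; see \r{prC}. By \r{EFczmC} the $k^-$ pieces then cancel, but the constant term of $[E_a[0],\FF^-_{a+\nu_a,a}(v)]$ survives and produces the $-\Omr_a$ term of \r{bvde1}. Your recipe returns only the $\alpha_a\Oml_a$ terms. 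That is already wrong for a single current, where $\Tzm_a\FF^+_{a+\nu_a,a}(t)\rvac$ is proportional to $\alpha_a(t)-1$.

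Finally, the obstacle you single out is misplaced. The factor $g(\bt^{s+\nu_s},\bt^s)^{-1}$ is a polynomial and cannot compensate zeros. The Serre-hyperplane zeros are compensated, where they are, by the poles of $h_s(\bt^s,\bt^s)^{-\zeta_s}$ inside $\bgam_s$. In any case, the analytic discussion of Section~\ref{an-pro-BV} plays no role in proving the theorem. No rank-two computations or induction on rank are needed, only the two defining relations (including the regularity check in \r{last-ver}) and uniqueness.
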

 \proof
Note, first of all, that according to the commutation relation \r{FF}, the  vector
$\Bf(\bt)$ is a  vector-valued function, symmetric
in the Bethe parameters $t^s_\ell$ of the same color.  According to the propositions~\ref{BVdef3}
and \ref{BVdef1} formulated below, the vector $\Bf(\bt)$ satisfies 
the defining relations  for the off-shell Bethe vectors in a generic 
$\fg$-invariant integrable model \cite{LPR-RR}. This implies that $\Bf(\bt)$ 
 satisfy the same rectangular recurrence relations as the off-shell 
Bethe vector $\BB(\bt)$ \cite{LPR-RR}. Hence 
they  coincide since $\Bf(\vn)=\BB(\vn)=\rvac$.  
Then, the theorem reduces to the proof of the propositions 
~\ref{BVdef3} and \ref{BVdef1}, done below. \qed

\begin{cor}\label{normal}
The normalization of the off-shell Bethe vectors 
given by the definition~\ref{def:offBV} is defined by 
the normalization of the main term of this vector
\begin{equation}\label{BV-mt}
\BB(\bt)=
\prod_{s\in\PSR_{\ggo}}\left(
\frac{1}{g(\bar t^{\,s+\nu_s},\bar t^{\,s})}\,
\frac{\nusp_s^{|\bt^s|}}{h_s(\bt^s,\bt^s)^{\zeta_s}}\right)\,
\prod_{s\in\PSR_{\ggo}}^{\longleftarrow}
\frac{T_{s,s+\nu_s}(\bt^s)}{\lambda_{s+\nus_s}(\bt^s)}\,\rvac+\cdots\,,
\end{equation}
where $\cdots$ stands for terms which contain the monodromy 
entries $T_{i,j}(t^s_\ell)$ for $i\leq s$, $s+\nus_s\leq j$ and $j-i>\nus_s$.
\end{cor}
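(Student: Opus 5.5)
Since Theorem~\ref{main-thm} identifies $\Bf(\bt)$ with $\BB(\bt)$, the main term of $\BB(\bt)$ can be read off from the projection formula \r{BVg}. The plan is to isolate, inside $\Pfp$ of the ordered product of currents, the unique contribution in which each current $F_s(t^s_\ell)$ is replaced by the Gaussian coordinate $\FF^+_{s+\nus_s,s}(t^s_\ell)$. All other contributions of the projection involve the composed Gaussian coordinates $\FF^+_{j,i}$ with $j-i>\nus_s$ (cf. Proposition~\ref{inv-DF}). After the translation to monodromy entries, these give exactly the terms ``$\cdots$''.

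\textbf{Step 1: the leading contribution of the projection.} Using \r{nor-form} together with the Drinfeld coproduct \r{D-cop}, I will write
\[
\Pfp\Big(\prod^{\longleftarrow}F_s(t^s_\ell)\Big)=\prod^{\longleftarrow}\FF^+_{s+\nus_s,s}(t^s_\ell)+\cdots ,
\]
where the dots denote terms containing $\FF^+_{j,i}$ with $j-i>\nus_s$, arising from the normal ordering of products of currents at coinciding colors. Within a fixed color $s$ the currents satisfy \r{FF}. The reordering of the same-color factors therefore produces rational factors, and these combine with $\prod_{\ell_1>\ell_2}\bgam_s(t^s_{\ell_1},t^s_{\ell_2})$. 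Since $\bgam_s=f_s/(h_s h_s)^{\zeta_s}$, I expect this combination to leave precisely $h_s(\bt^s,\bt^s)^{-\zeta_s}$. The diagonal factor $h_s(t,t)^{\zeta_s}=\fb_{s,s}^{\zeta_s}$ must be tracked in this step. \emph{This bookkeeping is the main obstacle}, in particular for the node $0$ of $B_n$ and $C_n$, where $\zeta_0\neq 0,1$ and $\nusp_0\neq1$.

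\textbf{Step 2: from Gaussian coordinates to monodromy entries.} Using the Gaussian decomposition \r{Gauss}, we have $T_{s,s+\nus_s}(u)=\FF^+_{s+\nus_s,s}(u)\,k^+_{s+\nus_s}(u)+\cdots$. Here the remaining terms contain $\EE^+$, which annihilate $\rvac$ after being commuted to the right. Each such commutation produces only higher composed coordinates.

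Acting on the vacuum and applying \r{rvec1}, each factor $k^+_{s+\nus_s}(t^s_\ell)$ yields $\lambda_{s+\nus_s}(t^s_\ell)$. Hence $\FF^+_{s+\nus_s,s}(\bt^s)$ acting on $\rvac$ equals $T_{s,s+\nus_s}(\bt^s)\lambda_{s+\nus_s}(\bt^s)^{-1}\rvac$ modulo higher terms. Some care is needed when moving $k^+$ past $\FF^+$ of other colors. This is governed by \r{KF}, and I expect it to contribute only to the subleading terms.

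\textbf{Step 3: the factor $\nusp_s$.} The factor $\nusp_s^{|\bt^s|}$ comes from $\hC^\pm_a=\nusp_a k_a k_{a+\nus_a}^{-1}$ in \r{sr-cur}, and equivalently from the counit $\counD(\hC^+_a)=\nusp_a$. It enters when the Cartan currents generated in \r{nor-form} are evaluated under the projection.

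\textbf{Conclusion.} Combining the prefactor of \r{BVg} with Steps 1--3 gives \r{BV-mt}, with the claimed description of the remainder.
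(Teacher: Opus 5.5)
Your route (read the main term off \eqref{BVg} once Theorem~\ref{main-thm} is known) is the one the paper intends: it states the corollary as a consequence of the theorem together with the rewriting \eqref{dif-form} of the prefactor. However, the two claims that carry your Steps~1 and~2 are false, and they fail exactly where the normalization is decided. Take $\mathfrak{sl}_2$, where there are no higher coordinates $\FF^+_{j,i}$ at all. The $\mathfrak{sl}_2$ analogue of \eqref{pro8} is $k^+_2(u)\FF^+_{2,1}(v)k^+_2(u)^{-1}=f(u,v)\FF^+_{2,1}(v)-g(u,v)\FF^+_{2,1}(u)$, and it gives
\[
\frac{T_{1,2}(t_2)\,T_{1,2}(t_1)}{\lambda_2(t_2)\lambda_2(t_1)}\rvac=\Big(f(t_2,t_1)\,\FF^+_{2,1}(t_2)\FF^+_{2,1}(t_1)-g(t_2,t_1)\,\FF^+_{2,1}(t_2)^2\Big)\rvac .
\]
Comparing \eqref{BVg} with \eqref{nor2} then forces
\[
\Pfp\big(F_1(t_2)F_1(t_1)\big)\rvac=\Big(\FF^+_{2,1}(t_2)\FF^+_{2,1}(t_1)-h(t_2,t_1)^{-1}\FF^+_{2,1}(t_2)^2\Big)\rvac=\frac{1}{f(t_2,t_1)}\,\frac{T_{1,2}(t_2)\,T_{1,2}(t_1)}{\lambda_2(t_2)\lambda_2(t_1)}\rvac .
\]
This contradicts both of your steps:
\begin{itemize}
\item[(i)] Step~1 is wrong: the projection is not $\prod^{\longleftarrow}\FF^+_{s+\nus_s,s}$ plus higher composed coordinates. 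Normal ordering of same-colour currents produces exchange terms built from the same simple-root coordinate at permuted arguments (compare \eqref{pro5}).
\item[(ii)] Step~2 is wrong: $\prod^{\longleftarrow}\FF^+_{s+\nus_s,s}(\bt^s)\rvac$ is not $T_{s,s+\nus_s}(\bt^s)\lambda_{s+\nus_s}(\bt^s)^{-1}\rvac$ modulo higher terms. Moving $k^+_{s+\nus_s}(t^s_{\ell_1})$ through $\FF^+_{s+\nus_s,s}(t^s_{\ell_2})$ of the \emph{same} colour produces the factor $f_s(t^s_{\ell_1},t^s_{\ell_2})$ at leading order, plus an exchange term.
\end{itemize}

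This same-colour commutation is the mechanism your proposal is missing. By \eqref{dif-form}, the prefactor of \eqref{BVg} equals the prefactor of \eqref{BV-mt} times $\prod_s\prod_{\ell_1>\ell_2}f_s(t^s_{\ell_1},t^s_{\ell_2})$. That product is cancelled by the $f_s$'s from the same-colour $k^+$--$\FF^+$ commutations of Step~2, with the exchange terms on the two sides matching. It is not cancelled by a reordering inside the projection: your Step~1 leading term is already in the order of the currents, so nothing there can absorb it. Taken literally, your Steps~1--2 give the coefficient $\mathcal N_\ggo(\bt)$, i.e.\ \eqref{BV-mt} times a spurious $\prod_s\prod_{\ell_1>\ell_2}f_s$. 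What must be shown is that, modulo entries $T_{i,j}$ with $j-i>\nus_s$,
\[
\Pfp\big(\BF(\bt)\big)\rvac=\prod_{s\in\PSR_\ggo}\prod_{\ell_1>\ell_2}f_s(t^s_{\ell_1},t^s_{\ell_2})^{-1}\prod^{\longleftarrow}_{s\in\PSR_\ggo}\frac{T_{s,s+\nus_s}(\bt^s)}{\lambda_{s+\nus_s}(\bt^s)}\rvac ,
\]
and this requires controlling the exchange terms.

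Two further points need attention:
\begin{itemize}
\item Your claim that cross-colour $k^+$ commutations only affect subleading terms needs an argument. For $\ggd$, both $T_{1,2}$ and $T_{0,2}$ carry $k^+_2$, and $k^+_2$ does not commute with the current $F_0$.
\item Step~3 double counts. The factor $\nusp_s^{|\bt^s|}$ comes entirely from the diagonal factors $h_s(t,t)^{\zeta_s}=\fb_{s,s}^{\zeta_s}=\nusp_s$ of $h_s(\bt^s,\bt^s)^{\zeta_s}$, i.e.\ from \eqref{dif-form}, which you already track in Step~1. The Cartan counits contribute nothing. For $\gsp$ with $\bt=\{\bt^0\}=\{t\}$ one has $\Bf(\bt)=\Pfp(F_0(t))\rvac=\FF^+_{1,0}(t)\rvac=T_{0,1}(t)\lambda_1(t)^{-1}\rvac$. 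This matches \eqref{BV-mt} because $\nusp_0/h_0(t,t)=1$, so an extra $\nusp_0$ would give the wrong answer.
\end{itemize}
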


The parameters $\zeta_s$, $\nusp_s$, and $\nus_s$ in \r{BV-mt}
are given by \r{g-gen} and  \r{nus-all}. 
Recall that these parameters in most of the cases are 
equal to 1 except the cases when $\zeta_0=0$ for $\fg=\fo_{2n+1}$, $\nusp_0=2$ for $\fg=\fs\fp_{2n}$,
and $\nus_0=2$ for $\fg=\fo_{2n}$.

One can note that the main term of the off-shell Bethe 
vector $\BB(\bt)$ shown in \r{BV-mt} is obviously 
symmetric with respect to permutation of the Bethe parameters 
of the same color. 
The normalization provided by this corollary is fixed according to the ordering of the noncommutative monodromy matrix entries appearing in the main term of the off-shell Bethe vector, as presented in \r{BV-mt}. This normalization factor will be changed 
accordingly, 
if one changes the order in this product of monodromy entries using the 
commutation relations \r{rtt}. 

Note also that this normalization is essentially universal, in the sense that the only model-dependent contribution is given by the product of the factors $\lambda_{s+\nus_s}(\bt^s)$.

\subsection{The highest monodromy matrix entry action}

For $\ggo\not=\mathfrak{sl}_n$ and 
$s\in\PSR_\ggo$
let  $\bz_s$ be the sets of  cardinality $1+\delta_{s\geq\bvphi_\ggo}$
defined by the equalities 
\begin{equation}\label{z-sh2}
\bz_s=\begin{cases}
\{z\}, & 0\leq s<\bvphi_\ggo\,,\\
\{z,z_s\},&\bvphi_\ggo\leq s\leq n-1\,,
\end{cases}
\end{equation}
where $z_s$ are given by
\eqref{z-sh1}.
We define the extended sets of  Bethe parameters 
\begin{equation}\label{exBCD}
\bw^s=\begin{cases}
\{\bt^s,z\},&\fg=\gga\,,\\
\{\bt^s,\bz_s\},&\fg\not=\gga\,.
\end{cases}
\end{equation}

\begin{prop}\label{BVdef3}
The action of $T_{n',n}(z)$ on the  vectors $\Bf(\bt)$ 
\r{BVg}  reads
\begin{equation}\label{T-act-g}
T_{n',n}(z)\, \Bf(\bt)=\lambda_n(z)\,\mus^n_{n'}(z;\bt)\, \Bf(\bw)\,,
\end{equation}
where the extended sets of Bethe 
parameters are given by \r{exBCD} and 
\begin{equation}\label{mus-all}
\mus^n_{n'}(z;\bt)=  \begin{cases}
h(\bt^1,z)\,h(z,\bt^{n-1}),& \ggo=\gga\,,\\[4mm]
\displaystyle \ovs_\ggo(z;\bt)\ 
\frac{h(z,\bt^{n-1})}{g(z_n,\bt^{n-1})},&\ggo\not=\gga\,,
\end{cases}
\end{equation}
with
\begin{equation}\label{ovsBCD}
\ovs_\ggo(z;\bt)= -\  \kappa_\ggo \begin{cases}
\displaystyle  
\frac{g(z-c/2,\bt^0)}{h(z,\bt^0)},&\ggo=\ggb\,,\\[4mm]
\displaystyle  
(-1)^{|\bt^0|}\,h(\bt^1,z),&\ggo=\gsp\,,\\[4mm]
\displaystyle 
(-1)^{|\bt^0|+|\bt^1|},&\ggo=\ggd\,.
\end{cases}
\end{equation}
\end{prop}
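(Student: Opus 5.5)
The plan is to work entirely inside the projection formalism. The highest entry $T_{n',n}(z)$ is expressed through Gaussian coordinates, and its action on $\Bf(\bt)$ is identified with the insertion of extra currents into the ordered product in \r{BVg}. By the Gaussian decomposition \r{Gauss}, and using \r{rvec1} and the triangularity of $\FF^+$, $\EE^+$, we have
\begin{equation*}
T_{n',n}(z)=\FF^+_{n,n'}(z)\,k^+_n(z)\,.
\end{equation*}
The reason is that only the term $\ell=n$ survives, since $\EE^+_{n,\ell}$ vanishes for $\ell<n$ and $\EE^+_{n,n}=1$. By Proposition~\ref{inv-DF}, $\FF^+_{n,n'}(z)=\Pfp(F_{n,n'}(z))$, where $F_{n,n'}(z)$ is the longest composed current \r{ccABC} or \r{ccD}. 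For $\ggo\neq\gga$ this current contains every simple root current once and (via \r{dep-cur}) every auxiliary current once, the latter evaluated at the shifted points $z_s$. This is the origin of the extended sets $\bw^s=\{\bt^s,\bz_s\}$ in \r{exBCD}.

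First, I commute $k^+_n(z)$ to the right through the projected product of currents, using \r{KF} (equivalently the $RTT$ relations). It then acts on $\rvac$ and gives $\lambda_n(z)$. The exchange with each $F_s(t^s_\ell)$ produces rational factors of the form $\cre$; only colours adjacent to the node $n$ contribute, namely the factors $h(z,\bt^{n-1})$ (and $h(\bt^1,z)$ in the $\gga$ case).

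Second, I use the key identity of the projection method (as in \cite{EKhP,KhP,HLPRS17}): for $\cF\in\oUF$,
\begin{equation*}
\Pfp(\cF_1)\,\Pfp(\cF_2)=\Pfp\big(\cF_1\,\Pfp(\cF_2)\big),
\end{equation*}
together with the fact that $\Pfp(\cF_1\,\cF_2^-)=0$ when $\cF_2^-$ has zero counit. The second fact holds because $\Pfp$ annihilates everything in $U^-_F$ with $\counD=0$, and $U^+_F$ is a subalgebra. This lets me rewrite
\begin{equation*}
\Pfp\big(F_{n,n'}(z)\big)\,\Pfp\Big(\prod^{\longleftarrow}F_s(t^s_\ell)\Big)=\Pfp\Big(F_{n,n'}(z)\prod^{\longleftarrow}F_s(t^s_\ell)\Big)\,.
\end{equation*}
Next, the simple-root factors of $F_{n,n'}(z)$ must be moved into their ordered positions using \r{FF}. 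Since the full product $F_{n,n'}(z)$ is a product at coinciding points (or at the special shifted points $z_s$), the exchange relations \r{FF} at these points degenerate. The product is then ordered colour by colour, and the residual rational factors combine with the normalization prefactor of \r{BVg}.

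Third, I compare normalizations. The prefactor for $\Bf(\bw)$ contains $g(\bw^{s+\nu_s},\bw^s)^{-1}$ and $\bgam_s$ products that include the new points. Dividing by the prefactor for $\Bf(\bt)$ leaves ratios such as $g(z,\bt^{s})$, $g(\bt^{s+1},z)$, and, for $\ggo\neq\gga$, factors involving $z_s$ and $z_{s+1}$. I expect these to telescope: most cancel against the exchange factors from the second step, leaving $\mus^n_{n'}(z;\bt)$. The pieces $1/g(z_n,\bt^{n-1})$ and $\ovs_\ggo$ arise from the boundary nodes. The factor $-\kappa_\ggo$ comes from the zeros of the Serre-type degeneration near node $0$ (for example $h(z,z_s)$ or $g$ evaluated at the $\kappa$-shift), and the signs $(-1)^{|\bt^0|}$ come from the minus signs in \r{dep-cur} and \r{ccD}.

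The main obstacle is the $\ggo\neq\gga$ case. Here $\bgam_s$ and $g(\bw^{s+\nu_s},\bw^s)$, evaluated at coinciding pairs such as $(z,z_s)$ or $(z_{s},z_{s+1})$, have zeros or poles. These must cancel exactly against the zeros of the composed-current product forced by the Serre relations (the property from \cite{LPR-SR-CC} that composed currents are well defined). To handle this I would first treat $|\bt|=\vn$ and check $T_{n',n}(z)\rvac=\lambda_n(z)\mus^n_{n'}(z;\vn)\Bf(\bz)$ by a direct computation, which fixes the constant $-\kappa_\ggo$. I would then argue that the $\bt$-dependence is captured entirely by the exchange factors of the first and second steps.
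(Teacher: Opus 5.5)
\textbf{There is a genuine gap.} Your opening step matches the paper: the factorization \r{hent} and $\FF^+_{n,n'}(z)=\Pfp(F_{n,n'}(z))$ from Proposition~\ref{inv-DF}. The step that carries the argument, however, rests on a false property.

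From \r{proj} one has $\Pfp(\cF^-\,\cF)=\counD(\cF^-)\,\Pfp(\cF)$. So elements of $U^-_F\cap{\rm Ker}(\counD)$ are annihilated only when they stand on the \emph{left}; these products form the ideal $\IF$. Your claim that $\Pfp(\cF_1\,\cF_2^-)=0$ for arbitrary $\cF_1$ would make $\Pfp$ multiplicative, $\Pfp(\cF_1\cF_2)=\Pfp(\cF_1)\,\Pfp(\cF_2)$. For $\fs\fl_3$ this would give $\Bf(u,\{v_1,v_2\})=\mathcal{N}_\ggo(\bt)\,\FF^+_{3,2}(v_2)\FF^+_{3,2}(v_1)\FF^+_{2,1}(u)\rvac$. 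That vector vanishes at $v_1=u$ through the factor $g(v_1,u)^{-1}$, contradicting \r{forA}, which is obtained directly by projections in \r{cal1}--\r{pro11}. The terms with $T_{1,3}$ in \r{BVAex} come from cross terms such as $\Pfp\big(\FF^+_{3,2}(v)\,\FF^-_{2,1}(u)\big)$, and these do not vanish. Hence $\Pfp(F_{n,n'}(z))\,\Pfp(\BF(\bt))=\Pfp(F_{n,n'}(z)\,\BF(\bt))$ is not a general property of the projection.

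The missing idea is Lemma~\ref{Tac-van}. Write $\BF(\bt)=\Pfp(\BF(\bt))+\mathfrak{F}$ with $\mathfrak{F}\in\IF$ as in \r{pap1}; one then needs $T^+_{n',n}(z)\,\mathfrak{F}\in\IF$. This is proved from the $RTT$ relations via $[T^+_{n',n}(z),\FF^-_{j,i}(u)]\in\IF$, and it is special to the highest entry: by the same example, its analogue fails for $\FF^+_{3,2}(v)$. The lemma gives $T^+_{n',n}(z)\,\Pfp(\BF(\bt))\rvac=\Pfp\big(T^+_{n',n}(z)\,\BF(\bt)\big)\rvac$. Only after this can $\Pfp(\Tb)$ be replaced by the unprojected $\Tb$ of \r{hent1}, now using the correct left-sided vanishing, as in \r{red-ac}.

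The same problem affects your first step. You cannot move $k^+_n(z)$ through the projected product $\Pfp(\BF(\bt))$ and pick up only scalar factors, because conjugation by $k^+_n$ does not preserve $U^-_F$. By \r{pro8}, $k^+_3(u)\,\FF^+_{3,2}(v)\,k^+_3(u)^{-1}$ acquires the extra term $-g(u,v)\,\FF^+_{3,2}(u)$. The clean factor arises only when $k^+_n$ is commuted with the \emph{unprojected} currents inside $\Pfp$, i.e.\ after Lemma~\ref{Tac-van}. That factor is $f(z,\bt^{n-1})$ from \r{kBF}, not $h(z,\bt^{n-1})$, and for $\gga$ with $n>2$ colour~$1$ does not interact with $k^+_n$ at all.

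Finally, the normalization comparison needs more than ``degenerate'' exchange relations, an expected telescoping and a separate $\bt=\vn$ calibration. The paper does the following:
\begin{itemize}
\item it splits the coinciding arguments into independent parameters $y^a,x^s$, as in \r{hent1}--\r{CC31};
\item it reorders $\Tb\,\BF(\bt)$ into $f(y^{n-1},\bt^{n-1})\,\mathcal{M}_\ggo\,\BF(\bbw)\,k^+_n(y^{n-1})$, with the explicit factor \r{norM-gln}--\r{norM-g};
\item it verifies that $\mathcal{M}_\ggo\,\mathcal{N}_\ggo(\bbw)^{-1}$ is regular at $y^a=x^s=z$ and equals \r{last-ver}.
\end{itemize}
The constant $-\kappa_\ggo$ and the whole $\bt$-dependence of $\mus^n_{n'}$ come out of that single computation.
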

\proof 
According to the Gaussian decomposition \r{Gauss}
the highest monodromy entry has a factorized form 
\begin{equation}\label{hent}
T^+_{n',n}(z)=\FF^+_{n,n'}(z)\,k^+_{n}(z)\,.
\end{equation}
Due to the proposition~\ref{inv-DF}
the highest Gaussian coordinate $\FF^+_{n,n'}(z)$ can be expressed 
through the simple root currents $F_a(z)$, $a\in\PSR_\ggo$ and the auxiliary 
currents $F_{a'-1}(z)$, $\varphi_\ggo\leq a\leq n-1$
 \r{dep-cur} by the projection $\Pfp$ \r{proj} as follows
\begin{equation}\label{CC3}
\begin{aligned}
\FF^+_{n,n'}(z)&= \Pfp\left(
\prod^{\longrightarrow}_{s\in\PSR_\ggo}F_s(z)\right),\qquad &&\mathfrak{g}=\gga\,,\\
\FF^+_{n,n'}(z)&= (-1)^{n-\frd_\ggo}\Pfp\left(
\prod^{\longleftarrow}_{s\in\PSR_\ggo,\ s\geq \varphi_\ggo}F_s(z_s)
\ \prod^{\longrightarrow}_{a\in\PSR_\ggo}F_{a}(z)\right),\qquad &&\mathfrak{g}\not=\gga\,.
\end{aligned}
\end{equation}
Recall that $z_s=z-c(s+\theta_\ggo)$ where parameters $\theta_\ggo$ 
are equal to $-1/2,1,-1$ for $\ggb$, $\gsp$, $\ggd$ respectively \r{kappa}. 

Let $\IF$ be the left ideal of $\DYg$ 
\begin{equation}\label{Iid}
\IF=\Big(U^-_F\cap {\rm Ker}(\counD)\Big) Y(\ggo)
.
\end{equation}
The ideal $\IF$ is annihilated by the projection 
$\Pfp(\IF)=0$. Recall that $U^-_F$ is the subalgebra of the 
Yangian double $\DYg$ generated by all Gaussian coordinates $\FF^-_{j,i}(z)$
such that $i,j\in\Ig$ and $i<j$. 

The normal ordering relation \r{nor-form} together with the Drinfeld coproduct 
\r{D-cop} for the simple root currents implies that any element $\cF$ of 
the  ideal $ \oUF\cap {\rm Ker}(\counD)$ in the 
completed subalgebra  $\oUF$ can be presented in the form 
\begin{equation}\label{pap1}
\cF=\Pfp(\cF)+\mathfrak{F}\,\qquad \mathfrak{F}\in\F\,.
\end{equation}
To calculate the action of the highest monodromy $T^+_{n',n}(z)$ on 
the off-shell Bethe vector $\BB(\bt)$ \r{BVg} we need following lemma.

\begin{lemma}\label{Tac-van}\cite{LP1}
The action of the highest monodromy entry $T^+_{n',n}(z)$ on any 
element of the left ideal $\F$ belongs to the same ideal. 
\end{lemma}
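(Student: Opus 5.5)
The plan is to reduce the statement to two stability properties and to use the factorization \r{hent}, $T^+_{n',n}(z)=\FF^+_{n,n'}(z)\,k^+_n(z)$. Write a generic element of $\F$ as a finite (or completed) sum of products $X\,Y$ with $X\in U^-_F\cap{\rm Ker}(\counD)$ and $Y\in Y(\ggo)=\sB^+$. Since $\F$ is a left ideal whose right factor already absorbs anything from $\sB^+$, it suffices to show two things. First, $k^+_n(z)\,X$ can be rewritten as $\sum X_k\,Y_k$ with $X_k\in U^-_F\cap{\rm Ker}(\counD)$ and $Y_k\in\sB^+$. Second, the same holds for $\FF^+_{n,n'}(z)\,X$. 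Then $T^+_{n',n}(z)XY=\FF^+_{n,n'}(z)\sum_k X_kY_kY\in\F$ follows by applying the second claim to each $X_k$.

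For the first claim I would take $X$ to be a monomial in the $\FF^-_{j,i}(w)$ and move $k^+_n(z)$ to the right through it. The relations \r{RTT} with $q_1=+$, $q_2=-$ are written in Gaussian coordinates, or equivalently via \r{KF} with \r{sr-cur} and Proposition~\ref{inv-DF}. By these relations, conjugation by $k^+_n(z)$ sends $\FF^-_{j,i}(w)$ into $U^-_F$ of the same $\fg$-weight, multiplied by rational functions of $z-w$ and by elements of $\sB^+$ on the right. Weight preservation is the important point. The counit $\counD$ vanishes on every element of $U^-_F$ of nonzero (negative root) weight, and $X$ has nonzero weight whenever $\counD(X)=0$ and $X$ is homogeneous. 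The resulting $X_k$ therefore automatically lie in ${\rm Ker}(\counD)$.

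The second claim is the main obstacle, and I would attack it through the normal-ordering formula \r{nor-form}. Let $a=\FF^+_{n,n'}(z)\in U^+_F$ and $X\in U^-_F\cap{\rm Ker}(\counD)$. Write
\begin{equation*}
aX=\sum \Pfm\big((aX)^{(2)}\big)\,\Pfp\big((aX)^{(1)}\big),
\end{equation*}
using multiplicativity of $\DeltaD$ \r{D-cop}. The terms in which the $\Pfm$-factor has zero counit lie in $\F$ by definition. The only dangerous contribution is the one where the left factor has nonzero counit, which after applying $\counD$ to the second tensor leg collapses to $\Pfp(aX)$. So the claim reduces to proving $\Pfp(aX)=0$ for $a\in U^+_F$ and $X\in U^-_F\cap{\rm Ker}(\counD)$.

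I would first try to establish this identity by checking it on the currents. Take $X=\Pfm$ of a product of currents, so that $X$ has zero counit and is of nonzero weight. Using $F_b(w)=\FF^+-\FF^-$ and the explicit form of $\DeltaD$, one shows that $\Pfp$ kills any product whose rightmost normal-ordered piece is a nontrivial negative-weight element of $U^-_F$. The reason is that moving $\FF^-$-factors to the left past $a$ via \r{RTT} only generates terms still containing a $U^-_F$-factor of zero counit on the left, and $\Pfp$ \r{proj} annihilates these. If this direct route becomes unwieldy, the fallback is to use the Hopf pairing between $U^+_F$ and $U^-_F$ from the quantum double construction. Under that pairing, $\Pfp(aX)=\sum a^{(1)}\langle a^{(2)},X\rangle$-type formulas hold, and $\langle\,\cdot\,,X\rangle$ vanishes on the components of $a^{(2)}$ of weight zero precisely because $\counD(X)=0$. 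Weight counting then forces the remaining terms to vanish or to land in $\F$.
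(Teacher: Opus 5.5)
There is a genuine gap. Both of your intermediate claims are false, so the lemma cannot be split along the factorization $T^+_{n',n}(z)=\FF^+_{n,n'}(z)\,k^+_n(z)$.

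\textbf{Claim~1 fails.} Take $(i,j)=(n,n)$, $(k,l)=(n-1,n)$, $q_1=+$, $q_2=-$ in \r{rtt}. For $\ggo\neq\mathfrak{sl}_n$ the $\kappa_\ggo$-terms drop out, since $n-1\neq n'$ and $n\neq n'$. Using $T_{n-1,n}=\FF_{n,n-1}k_n$, $T_{n,n}=k_n$ and $[k^+_n(z),k^-_n(v)]=0$, you get
\begin{equation*}
k^+_n(z)\,\FF^-_{n,n-1}(v)=\Big(f(z,v)\,\FF^-_{n,n-1}(v)-g(z,v)\,\FF^+_{n,n-1}(z)\Big)\,k^+_n(z)\,.
\end{equation*}
The last term equals $-g(z,v)\,T^+_{n-1,n}(z)\in Y(\ggo)$. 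Its $U^-_F$-part is $1$ and $\Pfp$ leaves it unchanged, so it is not in $\F$. Your weight argument breaks exactly here. Weight is conserved overall, but it can migrate into the $Y(\ggo)$ factor, because $\FF^+_{n,n-1}(z)$ carries the same weight as $\FF^-_{n,n-1}(v)$. So the $X_k$ need not have nonzero weight or zero counit.

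\textbf{Claim~2 fails too.} The identity $\Pfp(aX)=0$ for $a=\FF^+_{n,n'}(z)$ and $X\in U^-_F\cap{\rm Ker}(\counD)$ is false. Already for $\mathfrak{sl}_2$, \r{rtt} gives
\begin{equation*}
T^+_{1,2}(z)\,\FF^-_{2,1}(v)=f(v,z)\,\FF^-_{2,1}(v)\,T^+_{1,2}(z)-g(v,z)\,\FF^-_{2,1}(v)^2\,k^+_2(z)\in\F\,.
\end{equation*}
Combining this with the formula above for $n=2$ yields
\begin{equation*}
\Pfp\Big(\FF^+_{2,1}(z)\,\FF^-_{2,1}(v)\Big)=h(z,v)^{-1}\,\FF^+_{2,1}(z)^2\,.
\end{equation*}
This is nonzero; it already does not vanish on a spin-$1$ evaluation module. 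After right multiplication by $f(z,v)\,k^+_2(z)$, this piece is cancelled exactly by the term $-g(z,v)\,\FF^+_{2,1}(z)^2\,k^+_2(z)$, which comes from the bad term of Claim~1. So the lemma is a property of the product $T^+_{n',n}(z)$ as a whole. Your Hopf-pairing fallback aims at the same false identity and cannot repair this.

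\textbf{What the paper does instead.} Following \cite{LP1}, App.~A, it applies \r{rtt} to $T^+_{n',n}(z)$ itself against the $T^-$-entries to show $[T^+_{n',n}(z),\FF^-_{j,i}(u)]\in\F$. The extremality of the indices $n'$ and $n$ is what guarantees that a zero-counit factor $\FF^-$ survives on the left in every term, as in the $\mathfrak{sl}_2$ formula above. The extension to monomials then follows.
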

\proof Using the explicit expression of the highest monodromy element 
 $T_{n',n}(z)$ \r{hent} we can prove this lemma using the
 $RTT$ commutation relation \r{rtt}. It is sufficient to prove that 
 the commutator $[T^+_{n',n}(z),\FF^-_{j,i}(u)]\in\IF$ and we refer to appendix~A in  \cite{LP1} for a detailed proof.\qed 

Let 
\begin{equation}\label{Ng}
\mathcal{N}_\ggo(\bt)=
\prod_{s\in\PSR_{\ggo}}
\frac{1}{g(\bar t^{\,s+\nu_s},\bar t^{\,s})}
\prod_{\ell_1>\ell_2}^{|\bt^s|}\bgam_s(t^s_{\ell_1},t^s_{\ell_2})
\end{equation}
be the normalization factor in the off-shell Bethe vectors 
\r{BVg} and 
\begin{equation}\label{BFt}
\BF(\bt)=\prod_{s\in\PSR_{\ggo}}^{\longleftarrow}
\prod_{1\leq \ell\leq |\bt^s|} ^{\longleftarrow}F_s(t^s_\ell)\in  \oUF
\end{equation}
be an ordered product of the simple root currents 
depending on the sets of Bethe parameters $\bt^s=\{\bt^s_1,\ldots,\bt^s_{|\bt^s|}\}$. 
Then the  vector $\Bf(\bt)$ can be written shortly as follows
\begin{equation}\label{BVsh}
\Bf(\bt)=\mathcal{N}_\ggo(\bt)\,\Pfp\Big(\BF(\bt)\Big)\rvac\,.
\end{equation}

Let $\Tb(\{y^a\},\{x^s\})$ be an element of the
completed subalgebra   $\oUF$ of the form 
\begin{equation}\label{hent1}
\Tb(\{y^a\},\{x^s\})=
F_{n,n'}(\{y^a\},\{x^s\})\,k^+_{n}(y^{n-1})\,,
\end{equation}
depending on the parameters $\{y^a\}$, $\{x^s\}$,
$a,s\in\PSR_\ggo$ and $\varphi_\ggo\leq s\leq n-1$,
where 
\begin{equation}\label{CC31}
F_{n,n'}(\{y^a\},\{x^s\})= 
\begin{cases}\displaystyle \
\prod^{\longrightarrow}_{a\in\PSR_\ggo}F_a(y^a), &\mathfrak{g}=\gga\,,\\[5mm]
\displaystyle (-1)^{n-\frd_\ggo}
\prod^{\longleftarrow}_{s\in\PSR_\ggo,\ s\geq \varphi_\ggo}
F_s(x^s-c(s+\theta_\ggo))\prod^{\longrightarrow}_{a\in\PSR_\ggo}F_{a}(y^{a}),&
\mathfrak{g}\not=\gga\,.
\end{cases}
\end{equation}

\bigskip

Note that if $y^a=x^s=z$ for all values of the indices $a,s\in\PSR_\ggo$ and 
$\varphi_\ggo\leq s\leq n-1$,  the projection of the 
element $\Tb(\{y^a\},\{x^s\})$ \r{hent1}
coincides with the highest monodromy entry $T^+_{n',n}(z)$ \r{hent}
\begin{equation}\label{hent2}
T^+_{n',n}(z)=\Pfp\Big(\Tb(\{y^a\},\{x^s\})\Big)\Big|_{y^a=x^s=z}\,.
\end{equation}

Using lemma~\ref{Tac-van}, properties of the projections  and relation \r{hent2}
we can reduce the action of the highest monodromy on the 
off-shell Bethe vector \r{BVsh} 
\begin{equation}\label{red-ac}
\begin{split}
T^+_{n',n}(z)\,\Bf(\bt)&=\mathcal{N}_\ggo(\bt)\,
T^+_{n',n}(z)\,\Pfp\Big(\BF(\bt)\Big)\rvac=\\
&=\mathcal{N}_\ggo(\bt)\,\Pfp\Big(T^+_{n',n}(z)\,\BF(\bt)\Big)\rvac=
\mathcal{N}_\ggo(\bt)\,\Pfp\Big(\Tb(\{y^a\},\{x^s\})\,\BF(\bt)\Big)\rvac\Big|_{y^a=x^s=z}
\end{split}
\end{equation}
to the reordering of the currents in the product 
\begin{equation}\label{reod}
\begin{split}
\Tb(\{y^a\},\{x^s\})\,\BF(\bt)&=
F_{n,n'}(\{y^a\},\{x^s\})\,k^+_n(y^{n-1})\,\BF(\bt)=\\
&=f(y^{n-1},\bt^{n-1})\,\mathcal{M}_\ggo(\{y^a\},\{x^s\},\bt)\,
\BF(\bbw)\,k_n^+(y^{n-1})\,.
\end{split}
\end{equation}
The factor $f(y^{n-1},\bt^{n-1})$ in \r{reod} appears 
due to the commutation relation 
\begin{equation}\label{kBF}
k^+_n(y)\,\BF(\bt)=
f(y,\bt^{n-1})\,\BF(\bt) \,k^+_n(y)
\end{equation}
which is obtained from the $RTT$ commutation relations \r{rtt}.

In \r{reod} the sets $\bbw^s$, $s\in\PSR_\ggo$ are 
\begin{equation}\label{pre-w}
\bbw^s=\begin{cases}
\ \,\{\bt^s,y^s\},&\ggo=\gga\,,\\
\begin{array}{ll}
\{\bt^s,y^s\},&0\leq s<\varphi_\ggo,\\
\{\bt^s,y^s,x^s-c(s+\theta_\ggo)\},&\varphi_\ggo\leq s\leq n-1,
\end{array}
&\ggo\not=\gga\,,
\end{cases}
\end{equation}
and the normalization factor $\mathcal{M}_\ggo(\{y^a\},\{x^s\},\bt)$ is equal to

\begin{equation}\label{norM-gln}
\mathcal{M}_\ggo(\{y^a\},{\{x^s\}},\bt)=
\prod_{s=1}^{n-2}f(\{y^{s+1},\bt^{s+1}\},y^s)^{-1}
\quad\mbox{for}\quad \ggo=\gga,
\end{equation} 
\begin{equation}\label{norM-g}
\begin{split}
\mathcal{M}_\ggo(\{y^a\},\{x^s\},\bt)&=
\prod_{s=0}^{\varphi_\ggo-1}
f_s(\{y^{\varphi_\ggo},\bt^{\varphi_\ggo}\},y^s)^{-1}
\times\\
&\times \prod_{s=\varphi_\ggo}^{n-2}
f(\{y^{s+1},\bt^{s+1}\},\{y^s,x^s-c(s+\theta_\ggo)\})^{-1},
\quad\mbox{for}\quad \ggo\not=\gga\,.
\end{split}
\end{equation}

The lemma~\ref{BVdef3} follows from the equalities \r{red-ac} 
and the fact that for $a,s\in\PSR_\ggo$ and $\varphi_\ggo\leq s\leq n-1$
\begin{equation}\label{last-ver}
\mathcal{M}_\ggo(\{y^a\},\{x^s\},\bt)\, \mathcal{N}_\ggo(\bbw)^{-1}
\Big|_{y^a=x^s=z}=f(z,\bt^{n-1})^{-1}\,\mathcal{N}_\ggo(\bt)^{-1}\,\mu^n_{n'}(z;\bt)
\end{equation}
is non-singular and yields the rational function $\mu^n_{n'}(z;\bt)$ given 
by \r{mus-all}. The collection of the sets $\bbw$ in the l.h.s. of the equality
\r{last-ver} is given by \r{pre-w} and becomes the collection of the 
extended sets $\bw$ \r{exBCD} in the limit $y^a=x^s\to z$.\qed

\subsection{Zero modes action}

For any two simple labels  $a$ and $b$ of the Dynkin diagram for the algebra
 $\ggo$
  we define a rational function of two variables 
\begin{equation}\label{gam-all}
\bgam^b_a(u,v)=
\begin{cases}
g(v+c\,\nusp_a,u)^{-1},
&b>a,\quad \fb_{a,b}\not=0\,,\\[2mm]
g(u,v)^{-1},&b<a,\quad \fb_{a,b}\not=0\,,\\[2mm]
\displaystyle \bgam_a(u,v),&
a=b,\quad \fb_{a,b}\not=0\,,\\[2mm]
1,&a\not=b,\quad\fb_{a,b}=0.
\end{cases}
\end{equation}

For $a\in\PSR_\ggo$ we define the functions 
$\alpha_a(z)$ 
\begin{equation}\label{alpha-all}
\hC^+_a(z)\,\rvac= \nusp_a\,\alpha_a(z)\,\rvac
\end{equation}
and the generators of the nilpotent 
part of the algebra $\ggo$  expressed in terms of the zero modes 
of the monodromy matrix as follows 
\begin{equation}\label{zm-sr}
\Tzm_a=\TT_{a+\nu_a,a}=\EE^+_{a,a+\nu_a}[0]\,.
\end{equation}
\begin{prop}\label{BVdef1}
The action of the zero mode operator $\Tzm_a$, $a\in \PSR_\ggo$,
on the  vector $\Bf(\bt)$ is given by 
\begin{equation}\label{bvde1}
\Tzm_a\,\Bf(\bt)=\sum_{\rm part}
\Big(\alpha_a(\bt^a_{\so})\,\Oml_a(\bt_{\st},\bt_{\so})-
\Omr_a(\bt_{\so},\bt_{\st})\Big)\Bf(\bt_{\st})\,,
\end{equation}
where the sum runs over partitions $\bt^s\dashv\{\bt^s_{\so},\bt^s_{\st}\}$ with 
cardinality $|\bt^s_{\so}|=\delta_{s,a}$ and
\begin{equation}\label{bvde2}
\Omr_a(\bt_{\so},\bt_{\st})=\prod_{b\,\in\,\PSR_{\ggo}}
\bgam_a^b(\bt^a_{\so},\bt^b_{\st}),\qquad
\Oml_a(\bt_{\st},\bt_{\so})=\prod_{b\,\in\,\PSR_{\ggo}}
\bgam_b^a(\bt^b_{\st},\bt^a_{\so})\,.
\end{equation}
\end{prop}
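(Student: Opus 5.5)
We identify the zero mode $\Tzm_a=\EE^+_{a,a+\nus_a}[0]$ with the zero mode $E_a[0]$ of the simple root current $E_a(u)$. From the Ding--Frenkel formulas \r{sr-cur}, $E_a(u)=\EE^+_{a,a+\nus_a}(u)-\EE^-_{a,a+\nus_a}(u)$. Since $\EE^-$ contains only modes $\ell<0$, the coefficient of $(u/c)^{-1}$ in $E_a(u)$ is $\EE^+_{a,a+\nus_a}[0]$. Hence $\Tzm_a=E_a[0]=\frac{1}{c}\oint du\,E_a(u)$, the coefficient of $(u/c)^{-1}$.

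With this identification the proof proceeds in four steps.

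\textbf{Step 1: move $\Tzm_a$ inside the projection.} Because $\EE^+_{i,j}(u)\rvac=0$, we have $\Tzm_a\rvac=0$, so only the commutator $[\Tzm_a,\Pfp(\BF(\bt))]$ contributes. We check that $[E_a[0],\cdot\,]$ preserves the left ideal $\IF$ of \r{Iid}, in the same way as in Lemma~\ref{Tac-van}. Indeed, $E_a[0]$ is a zero mode acting through the $\ggo$-action \r{zm-Tij}. It maps the Gaussian coordinates $\FF^-_{j,i}$ to combinations of products of $\FF^-$ and $k^-$ coordinates, and these stay in $U^-_F\cap{\rm Ker}(\counD)$ times $Y(\ggo)$ up to terms annihilated by $\rvac$. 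Granting this, \r{pap1} gives
\[
\Tzm_a\,\Bf(\bt)=\mathcal{N}_\ggo(\bt)\,\Pfp\big([E_a[0],\BF(\bt)]\big)\rvac .
\]

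\textbf{Step 2: compute the commutator.} By \r{EF} and $[E_a(u),F_b(v)]=0$ for $b\ne a$, only the currents of color $a$ in $\BF(\bt)$ contribute. Taking the zero mode of the delta function gives
\[
[E_a[0],\BF(\bt)]=\sum_{\ell}\;\cdots F_a(t^a_{\ell+1})\big(\hC^+_a(t^a_\ell)-\hC^-_a(t^a_\ell)\big)F_a(t^a_{\ell-1})\cdots .
\]
In each term we move $\hC^+_a(t^a_\ell)$ to the right using \r{KF}; it then acts on $\rvac$ through \r{alpha-all}, producing $\nusp_a\alpha_a(t^a_\ell)$. We move $\hC^-_a(t^a_\ell)$ to the left, where its projection contributes only through $\counD$, giving $\nusp_a$ (one checks that $\Pfp$ of an element $\hC^-\cdot X$ equals $\nusp_a\Pfp(X)$).

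\textbf{Step 3: collect the exchange factors.} The exchange factors come from \r{KF} with the functions $\cre_{a,b}$ \r{CuCu}. The currents lying to the right of $t^a_\ell$ produce the factor with $\hC^+$, and those lying to the left produce the factor with $\hC^-$. The remaining product of currents is $\BF(\bt_{\st})$ up to reordering within color $a$, which \r{FF} handles with $h_a$-type factors.

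\textbf{Step 4: match the normalization.} Multiplying by $\mathcal{N}_\ggo(\bt)/\mathcal{N}_\ggo(\bt_{\st})$ should turn the products of $\cre$'s into $\Oml_a$ and $\Omr_a$. The factor $g(\bt^{s+\nu_s},\bt^s)^{-1}$ converts $h_{a,b}$ and $g^{-1}$ into $\bgam^b_a$ as in \r{gam-all}, including the shift $v+c\nusp_a$. The factor $\bgam_a(t_{\ell_1},t_{\ell_2})$ combines with the exchange factors within color $a$ to give the symmetric functions $\bgam_a(\bt^a_{\so},\bt^a_{\st})$ and $\bgam_a(\bt^a_{\st},\bt^a_{\so})$. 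Summing over $\ell$ then gives the sum over partitions in \r{bvde1}. The special cases ($\nusp_0=2$ for $\gsp$, $\nus_0=2$ for $\ggd$, and $\zeta_0$ for $\ggb$) have to be checked separately.

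The main obstacle is Step 1 together with the treatment of the $\hC^-$ term. We must show that the commutator of $E_a[0]$ with elements of $U^-_F$ stays in $\IF$ modulo terms killed by $\rvac$, and that $\Pfp$ of $\hC^-_a(u)X$ reduces to the counit. The first thing to try is expressing $E_a[0]$ through $\TT_{a+\nus_a,a}$ and using the relations \r{zm-Tij}, which are linear and act on the $T^-$ entries without mixing them with $T^+$. The zero-mode action maps $U^-$ into itself; this is what should give the required ideal property.
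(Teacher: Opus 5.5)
\textbf{Verdict: genuine gap in Step 1, but a repairable one.} Your Steps 2--4 are the right computation, and they reach the paper's functions by different bookkeeping. By \r{KF}, moving $\hC^-_a(t^a_\ell)$ to the left through the currents standing to the left of $F_a(t^a_\ell)$ produces exactly the exchange factors $-\cre_{a,b}(u,v)/\cre_{b,a}(v,u)$. The paper gets the same factors from $\hC^+_b$ passing $F_a(t^a_\ell)$ in the Drinfeld coproduct, so you obtain the same $\Omr_a$ and $\Oml_a$. (Deleting one factor from the ordered product already gives $\BF(\bt_\st)$, so no reordering within colour $a$ is needed.)

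The gap is Step 1, and the justification you give for it is false. With $\IF$ as in \r{Iid}, built on $U^-_F$ (generated by the $\FF^-_{j,i}$ alone), $[E_a[0],\cdot\,]$ does not preserve $\IF$. By \r{EFczmC}, $[E_a[0],\FF^-_{a+\nus_a,a}(v)]=\nusp_a\big(k^-_a(v)k^-_{a+\nus_a}(v)^{-1}-1\big)$, which is not in $U^-_F\cdot Y(\ggo)$. It also cannot be discarded as ``annihilated by $\rvac$'': the $k^-$ have no action on the space of states, and the constant $-\nusp_a$ is a genuine contribution. In the paper it is exactly the source of the $-\Omr_a$ term. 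For the same reason $\Pfp$ is not defined on $[E_a[0],\BF(\bt)]$, since $\hC^-_a\notin\oUF$. So the rule $\Pfp(\hC^-_aX)=\nusp_a\Pfp(X)$ cannot be ``checked''; it has to be built into a definition. Your remark that zero modes preserve $U^-$ does not help, because $U^-$ also contains the $k^-$ and the $\EE^-$.

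\textbf{How the paper avoids this.} It does not claim ideal stability for all of $\IF$. In \r{prC} it splits off, from the normal-ordered expansion of $\Pfp(\BF(\bt))$, the terms with a single $\FF^-_{a+\nus_a,a}(\bt^a_\so)$, whose coefficient $\nusp_a^{-1}\Omr_a$ is read off from \r{D-cop}. It asserts \r{Fprop} only for the remainder. It then uses $[E_a[0],\Pfp(\BF(\bt))]\in Y(\ggo)$ to drop every term containing $k^-$, so that $-\Omr_a$ comes from the $-1$ in \r{EFczmC}.

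\textbf{How to complete your route.} Enlarge $U^-_F$ to the algebra generated by the $\FF^-_{j,i}$ and the $k^-_i$, equipped with the counit $\coun$ of \r{st-cop}, so that $\coun(k^-_i)=1$ and $\coun(\hC^-_a)=\nusp_a$. On this algebra times $Y(\ggo)$, define the projection $X^-Y\mapsto\coun(X^-)\,Y$. Since $\coun$ is an algebra homomorphism, it vanishes on commutators. You then need to verify two things:
\begin{itemize}
\item from \r{zm-Tij} and the Gauss decomposition, ${\rm ad}\,E_a[0]$ sends each $\FF^-_{j,i}$ into the enlarged algebra, with no $\EE^-$ appearing;
\item the product map from the enlarged algebra tensored with $Y(\ggo)$ into the double is injective, so the projection is well defined.
\end{itemize}
Once these hold, both of your obstacles are settled together: $[E_a[0],\IF]$ and $\hC^-_a(u)-\nusp_a$ both lie in the enlarged kernel, and your direct computation goes through.

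\textbf{Caution for the $\nusp_a\neq1$ check in Step 4.} Keep the overall factor $\nusp_a$ carried by $\hC^\pm_a$. For $\gsp$, $a=0$ and a single root, \r{zm-Tij} gives directly $\Tzm_0\Bf(t)=2(\alpha_0(t)-1)\rvac$. This has to be reconciled with the normalization in \r{bvde1}, which as written predicts $(\alpha_0(t)-1)\rvac$.
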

\proof According 
to the expansion \r{T-eL}, the Gaussian decomposition \r{Gauss} and the 
Ding-Frenkel formulas \r{sr-cur}  the zero mode $\Tzm_a=\EE^+_{a,a+\nus_a}[0]$
coincides with the zero mode $E_a[0]$ of the simple root current $E_a(u)$
for $a\in\PSR_\fg$.

Let $\IE$ be the right ideal of $Y(\ggo)$ 
\begin{equation}\label{Jid}
\IE=Y(\ggo)\, \Big(U^+_E\cap {\rm Ker}(\varepsilon)\Big)\,.
\end{equation}
Here $U^+_E$ is the subalgebra of the Yangian $Y(\ggo)$ 
generated by  the Gaussian coordinates  $\EE^+_{i,j}(u)$
and $\varepsilon$ is the counit. It is clear  that the  ideal $\IE$ annihilates
the vacuum vector $\rvac$ due to \r{rvec1}. 

The commutation relation \r{EF} yields 
\begin{equation}\label{EFcomC}
[E_{a}[0],F_{b}(v)]= 
\nusp_a\ \delta_{a,b}\ \Big(k^{+}_{a}(v)\cdot 
k^{+}_{a+\nus_a}(v)^{-1}-k^{-}_{a}(v)\cdot k^{-}_{a+\nus_a}(v)^{-1}\Big)
\end{equation}
and 
\begin{equation}\label{EFczmC}
 [E_{a}[0],\FF_{b+\nus_b,b}^{-}(v)]= 
 \nusp_a\ \delta_{a,b}\ \Big(k^{-}_{a}(v)\cdot 
 k^{-}_{a+\nus_a}(v)^{-1}-1\Big)\,.
\end{equation}

In order to calculate the action of the zero mode operators $\Tzm_a$ 
 on the off-shell Bethe vectors $\Bf(\bar t)$
 one can use the normal ordering relation
\r{nor-form}  to present the  normalized projection of  
$\mathcal{N}_\fg(\bt)\,\BF(\bt)$
 in the form \r{pap1}:
\begin{equation}\label{prC}
 \mathcal{N}_\fg(\bt)\,\Pfp(\BF(\bar{t})) = \mathcal{N}_\fg(\bt)\,\BF(\bar{t}) + \sum_{a\in\PSR_\ggo}
 \sum_{\rm part}\nusp_a^{-1}\ \Omr_a(\bt_\so,\bt_\st)\,
  \FF^{-}_{a+\nus_a,a}(\bt^{a}_{\so}) \cdot 
 \mathcal{N}_\fg(\bt_\st)\, \BF(\bt_{\st})  + \mathfrak{F}\,,
\end{equation}
where the summation runs over partitions 
  such that the cardinalities of the subsets $\bt^s_{\so}$ 
 are equal to $\delta_{s,a}$.
 The collection of sets $\bt_\st$ in \r{prC}  is equal to 
 $\{\bt^1,\ldots,\bt^{a-1},\bt^a_\st,\bt^{a+1},\ldots,\bt^{n-1}\}$ and  
 the element $\mathfrak{F}\in\IF$ satisfies the property 
 \begin{equation}\label{Fprop}
 [E_a[0],\mathfrak{F}]\in\IF\,.
 \end{equation}

Commuting left and right hand sides of the equality \r{prC} with a zero mode 
operator $E_a[0]$ one gets 
\begin{equation}\label{prC1}
\begin{split}
& \mathcal{N}_\fg(\bt)\,\Big[E_a[0],\Pfp\Big(\BF(\bar{t})\Big)\Big] = \mathcal{N}_\fg(\bt)\,
 \Big[E_a[0],\BF(\bar{t})\Big] +\\
 &\quad + 
 \sum_{\rm part}\nusp_a^{-1}\ \Omr_a(\bt_\so,\bt_\st)
 \Big[E_a[0], \FF^{-}_{a+\nus_a,a}(\bt^{a}_{\so})\Big] \cdot 
 \mathcal{N}_\fg(\bt_\st)\, \Pfp\Big(\BF(\bt_{\st})\Big)  + \mathfrak{F}'\,,
 \end{split}
\end{equation}
where again the element $\mathfrak{F}'\in\IF$.

Note now that the l.h.s. of the equality \r{prC1} belongs to the 
Yangian subalgebra $Y(\fg)\subset \RDYg$. 
It means that all the terms in the r.h.s. of \r{prC1} 
which belong to the left ideal $\IF$ or contain the ratio 
of the 'negative' Cartan currents $k^{-}_{a}(\bt^a_\so)\, 
 k^{-}_{a+\nus_a}(\bt^a_\so)^{-1}$ should vanish. 
 Neglecting these terms in the r.h.s. of \r{prC1} and using 
 \r{EFcomC}, \r{EFczmC} and the commutation relations \r{KF} 
 one gets from \r{prC1} the following equality in the Yangian  $Y(\fg)$
 \begin{equation}\label{prC2}
\begin{split}
& \mathcal{N}_\fg(\bt)\,\Big[E_a[0],\Pfp\Big(\BF(\bar{t})\Big)\Big] =  \\
&\qquad =\sum_{\rm part} 
 \mathcal{N}_\fg(\bt_\st)\, \Pfp\Big(\BF(\bt_{\st})\Big)\,
 \Big(\Oml_a(\bt_\st,\bt_\so)\, 
 k^+_a(\bt^a_\so)\,k^{+}_{a+\nus_a}(\bt^a_\so)^{-1} - \Omr_a(\bt_\so,\bt_\st) \Big)\,.
 \end{split}
\end{equation}
Applying this equality to the vacuum vector $\rvac$ one gets \r{bvde1}.
 \qed

 This ends the proof of the main theorem~\ref{main-thm}. 
From now on, we will identify the vectors 
$\Bf(\bt)$ given by  \r{BVg}  
with the off-shell Bethe vectors $\BB(\bt)$ corresponding to the definition \ref{def:offBV}.

\subsection{Normalization of off-shell Bethe vectors}

The defining relations for the off-shell Bethe vectors not only fix 
the structure of polynomial $\mathcal{P}$ 
which defines the off-shell Bethe vector \r{bvgr}
 but also fix, when the functions 
$\mus^n_{n'}(z;\bt)$ \r{mus-all}, $\Oml(\bt_{\st},\bt_{\so})$, 
$\Omr(\bt_{\so},\bt_{\st})$ \r{bvde2} 
are given,  the overall normalization 
of these vectors. In the framework of the projection method this 
unique normalization is given by the prefactor in \r{BVg}
\begin{equation}\label{dif-form}
\prod_{s\in\PSR_{\ggo}}\frac{1}{g(\bt^{s+\nus_s},\bt^s)}
\prod_{\ell_1>\ell_2}^{r_s}\bgam_s(t^s_{\ell_1},t^s_{\ell_2})
=
\prod_{s\in\PSR_{\ggo}}\frac{\nusp_s^{|\bt^s|}}
{g(\bt^{s+\nus_s},\bt^s)\,h_s(\bt^s,\bt^s)^{\zeta_s}}
\prod_{\ell_1>\ell_2}^{r_s}f_s(t^s_{\ell_1},t^s_{\ell_2})\,.
\end{equation}

As it was already mentioned at the end of the section~\ref{g-all}
we will always consider the products of  simple root currents 
in the category of the highest  weight
representations of the Yangian double $\DYg$.
In this category, arbitrary matrix element of such a product of currents 
between generic vectors 
from the highest  weight representation may be considered as the function 
with certain analytical properties dictated by the Serre and commutation relations 
between  simple root currents. In this context the formal parameters 
of the currents maybe replaced by  complex numbers and identified 
with the Bethe parameters $t^s_\ell$, $\ell=1,\ldots,|\bt^s|$, $s\in\PSR_\ggo$
(see \cite{EKhP} for such an interpretation in the case of the quantum 
affine algebras). 

We will demonstrate below that in this approach, the 
projection of the product of  simple root currents in \r{BVg}, considered as 
a function the Bethe parameters, may have zeros and poles. 
A unique feature of the normalization factor \r{dif-form} is that 
most of the  zeros and poles 
are compensated by this factor. The price to pay 
 is the apparition of spurious poles coming from this normalization 
factor. Let us illustrate this phenomena in the simplest 
example of the off-shell Bethe vector in $\mathfrak{sl}_2$-invariant 
integrable model. The off-shell Bethe vector $\BB(\bt)$ in this case 
depends on a single set of  Bethe parameters $\bt=t_1,\ldots,t_{|\bt|}$ 
and the defining relation  \r{T-act-g} for the off-shell Bethe vectors
 reads 
\begin{equation}\label{nor1}
T_{1,2}(z)\,\BB(\bt)=\lambda_2(z)\,h(\bt,z)\,h(z,\bt)\,\BB(\{\bt,z\})\,.
\end{equation}
This relation can be resolved (up to the overall factor not depending on the 
Bethe parameters) in the well-known form 
\begin{equation}\label{nor2}
\BB(\bt)=\frac{1}{\lambda_2(\bt)\,h(\bt,\bt)}\
T_{1,2}(t_1)\,T_{1,2}(t_2)\cdots T_{1,2}(t_{|\bt|})\rvac\,.
\end{equation}
Since according to the $RTT$ commutation relations, the 
monodromy entries $T_{1,2}(u)$ commute for 
different values of the spectral parameters, there is nothing special 
in the product $T_{1,2}(\bt)$ of monodromy entries in \r{nor2} when 
$t_i=t_j\pm c$, $i\not=j$. On the other hand the denominator 
of the overall normalization factor in \r{nor2}
will generate a simple poles for these relations between Bethe parameters.
We will call such type of singularities, {\sl spurious singularities}.
They can be always removed by a renormalization of the particular 
off-shell Bethe vectors. 
In the appendix~\ref{AppB} we will provide examples of spurious 
singularities for $\mathfrak{sl}_3$ type off-shell Bethe vectors and 
 for   $\mathfrak{sp}_4$ type Bethe vectors.

\subsection{Coproduct property}

It was proved in the paper \cite{LPR-RR} that  
the off-shell Bethe vectors of the definition \r{def:offBV}, satisfy a coproduct property. 
 This statement can be verified independently 
in the framework of the projection method.

Let us consider the composed integrable 
model \cite{IK84, S22} whose monodromy matrix 
 is defined by the standard 
coproduct \r{st-cop} in the Yangian double subalgebra $Y(\ggo)$,
$\Delta(T(z))=T^{[2]}(z)\cdot T^{[1]}(z)$, where $(A\cdot B)$
stands for
 the matrix multiplication 
in $N_\ggo$-dimensional auxiliary space.
For any partition $\bt^s\dashv\{\bt^s_{\so},\bt^s_{\st}\}$, $s\in\PSR_\ggo$,  we define 
the function  
\begin{equation}\label{Om-gen}
\Om(\bt_{\so},\bt_{\st})=\prod_{a\,\in\,\PSR_{\ggo}}\Omr_a(\bt_{\so},\bt_{\st})
=\prod_{a\,\in\,\PSR_{\ggo}}\Oml_a(\bt_{\so},\bt_{\st})=
\prod_{a,b\,\in\,\PSR_{\ggo}}\bgam^b_a(\bt^a_{\so},\bt^b_{\st})\,.
\end{equation}
Then the off-shell Bethe vectors 
$\BB(\bt)$ \r{BVg} satisfy the following coproduct property.
\begin{prop}\label{cop-pro}
The off-shell Bethe vectors of the composed model can be expressed 
as  
\begin{equation}\label{BV-g-cop}
\Delta\,\BB(\bt)=\sum_{{\rm part}\ \bt} 
\Om(\bt_{\so},\bt_{\st})\,
\prod_{s\in\PSR_\ggo}\alpha^{[2]}_s(\bar t^s_{\st})
\, \BB^{[1]}(\bar t_{\st})\, \BB^{[2]}(\bar t_{\so})\,. 
\end{equation}
\end{prop}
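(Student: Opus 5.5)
Two routes are available. The first works entirely within the projection method: use the Drinfeld coproduct \r{D-cop} of the product of currents $\BF(\bt)$ and relate it to the standard coproduct \r{st-cop} of the Gaussian coordinates. The second, which we regard as the primary plan because it uses only results already established here, is to check that the right-hand side of \r{BV-g-cop} satisfies the defining relations of propositions~\ref{BVdef3} and \ref{BVdef1} in the composed model. Uniqueness of the off-shell Bethe vector (definition~\ref{def:offBV}, \cite{LPR-RR}) together with $\Delta\BB(\vn)=\rvac\otimes\rvac$ then gives the equality. The right-hand side is symmetric in parameters of the same color, since $\BB^{[1]},\BB^{[2]}$ are and the sum runs over all partitions. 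The vacuum of the composed model is $\rvac^{[1]}\otimes\rvac^{[2]}$, and by \r{st-cop} its eigenvalues are $\lambda_i(u)=\lambda^{[1]}_i(u)\lambda^{[2]}_i(u)$. Because the Cartan currents $\hC^+_a$ are group-like up to $\nusp_a$, we also get $\alpha_a=\alpha^{[1]}_a\alpha^{[2]}_a$.

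\emph{Zero modes.} For this step $\Delta(\Tzm_a)=\Tzm_a\otimes 1+1\otimes\Tzm_a$, because zero modes of $T^+$ are primitive by \r{T-eL} and \r{st-cop}. Acting on each summand, we apply \r{bvde1} to $\BB^{[1]}(\bt_\st)$ and to $\BB^{[2]}(\bt_\so)$ separately and then regroup by the final partition. Write $\bt$ as $\{\bt_\so,\bt_\st\}$ with one parameter $t$ of color $a$ removed. The coefficient of $\BB^{[1]}(\bt_\st)\BB^{[2]}(\bt_\so)$ receives contributions from $t\in$ part I and from $t\in$ part II. In these contributions, $\alpha^{[2]}$-weighted $\Oml$ terms from the first factor combine with the $\Omr$ terms of the second. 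The identity to verify reduces to factorizations of the form $\Om(\{\bt_\so,t\},\bt_\st)=\Om(\bt_\so,\bt_\st)\prod_b\bgam^b_a(t,\bt^b_\st)$, which hold because $\Om$ is a product of two-point functions \r{Om-gen}. There is one cross term, with $\alpha^{[2]}_a(t)\Oml$ from the first factor against $\Omr$ from the second. It must cancel pairwise, and this should work exactly as in the $\gln$ case of \cite{LPR-RR}.

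\emph{Highest entry.} We have $\Delta T_{n',n}(z)=\sum_{\ell}T^{[2]}_{\ell,n}(z)\otimes T^{[1]}_{n',\ell}(z)$. Since the vectors are not eigenvectors of the intermediate entries, this does not close on its own. We therefore plan to avoid this direct check and instead prove \r{BV-g-cop} through projections. Take the factorized form $\BB^{[1]}\otimes\BB^{[2]}$ with $\Pfp$ applied, and compute $\DeltaD(\BF(\bt))$ using \r{D-cop}. Each current contributes $1\otimes F_s(t)+F_s(t)\otimes\hC^+_s(t)$, and reordering by \r{KF} produces exactly the $\bgam^b_a$ factors; the normalizations $\mathcal N_\ggo$ split modulo $\Om$. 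Acting on the vacuum, $\hC^+_s$ gives $\nusp_s\alpha^{[2]}_s$. What remains is to relate $\DeltaD$ to $\Delta$ on $\Pfp(\cdot)\rvac$. This uses that $\Delta$ and $\DeltaD$ are related by twisting with the universal $R$-matrix factor lying in $U^-_F\otimes U^+_E$, in the style of \cite{EKhP,KhP}, and that $U^+_E\cap{\rm Ker}\,\varepsilon$ kills $\rvac$ as in \r{Jid}.

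We expect the main obstacle to be this twisting step: showing $\Delta\circ\Pfp=(\Pfp\otimes\Pfp)\circ\DeltaD$ on vacuum vectors, i.e.\ that the ideal $\IF$ contributions drop out in the tensor product. We would first try to prove it generator-wise for the Gaussian coordinates $\FF^+_{j,i}$ using Proposition~\ref{inv-DF}. If that fails, we would fall back on the uniqueness argument above and handle the highest entry by induction on $|\bt|$ through the recurrence of \cite{LPR-RR}.
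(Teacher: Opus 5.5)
After you drop the uniqueness route at the highest-entry step, your argument is the paper's; note that this also makes the zero-mode verification superfluous. It uses the identity $\Delta(\Pfp(\cF))=(\Pfp\otimes\Pfp)\DeltaD(\cF)$ modulo $U^+_F\otimes\IE$, obtained from the twisting argument of \cite{EKhP}, then expands $\DeltaD(\BF(\bt))$ by \r{D-cop}, reorders with \r{KF}, \r{FF}, and evaluates on $\rvac\otimes\rvac$. The paper also quotes this identity rather than proving it, and checking it only on the generators $\FF^+_{j,i}$ would not give it, because $\Pfp$ is not multiplicative and $U^+_F\otimes\IE$ is not closed under right multiplication.

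In the vacuum evaluation, the partner of $F_a$ in $\DeltaD(F_a)$ must be the group-like $\nusp_a^{-1}\hC^+_a$, so each parameter of $\bt_{\st}$ contributes $\alpha^{[2]}_s$, not the $\nusp_s\alpha^{[2]}_s$ of your sketch. Otherwise a stray factor $2^{|\bt^0_{\st}|}$ survives for $\gsp$, as the case $\bt=\bt^0=\{t\}$ computed directly from \r{st-cop} shows.
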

\proof 
Let  $\rvac=\rvac^{[1]}\otimes\rvac^{[2]}$ be the vacuum vector 
of the composed $\ggo$-invariant integrable model. 
Using the relation between the coproduct $\Delta$ \r{st-cop} in Yangian double 
 $\RDYg$ and the Drinfeld's coproduct \r{D-cop} in the 
completed subalgebra  $\oUF$,
and exploring 
the same approach as in the proof of the theorem~4 
in \cite{EKhP}, we can show that for any element  $\cF\in \oUF$ 
\begin{equation}\label{mag}
\Delta\Big(\Pfp\big(\cF\big)\Big)=(\Pfp\otimes\Pfp)\,
\Delta^{(D)}(\cF)\quad {\rm mod}\quad  U^+_F\otimes \IE\,,
\end{equation}
where $\IE$ is the right ideal of $Y(\ggo)$ defined by \r{Jid}.
Then the proof of 
the coproduct property is a straightforward calculation using the 
coproduct \r{D-cop} in  $\oUF$ 
and the commutation relations \r{KF} and \r{FF}.\qed

\section{Analytical properties of off-shell Bethe vectors}
\label{an-pro-BV}

For any two different nodes $a\neq b$ of the Dynkin diagram of $\ggo$ 
such that $\fb_{a,b}\neq 0$ we consider the cardinality 1 set $\bt^b=\{t^b\}$
and the cardinality $m_{a,b}=1-\ago_{a,b}$ set $\bt^a=\{t^a_1,\ldots,t^a_{m_{a,b}}\}$,
where $\ago_{a,b}$ is the Cartan matrix for the algebra $\ggo$. 
Let us consider the following  element from the subalgebra  $\oUF$
\begin{equation}\label{Ser-ele}
\begin{split}
\cF_{a,b}(\bt^a,\bt^b)&=
\prod_{\ell_1<\ell_2}^{m_{a,b}}\cre_{a,a}(t^a_{\ell_1},t^a_{\ell_2})\ 
\cre_{b,a}(t^b,\bt^a)\ F_b(t^b)\,
F_a(t^a_1)\cdots F_a(t^a_{m_{a,b}})=\\
 &=
\prod_{\ell_1<\ell_2}^{m_{a,b}}\cre_{a,a}(t^a_{\ell_1},t^a_{\ell_2})\ 
\cre_{a,b}(\bt^a,t^b)\ 
F_a(t^a_1)\cdots F_a(t^a_{m_{a,b}})\, F_b(t^b)\,.
 \end{split}
\end{equation}
The equality between the first and the second line in \r{Ser-ele} as well as the
symmetry  of this element  with respect to permutations 
in the set $\bt^a$ follow from the commutation relations 
\r{FF}.

Using results of the paper \cite{LPR-SR-CC}
 we can describe the analytical properties 
of off-shell Bethe vectors \r{BVg}. We will describe these analytical properties 
separately for each algebra $\ggo$.  

\subsection{Analytical properties of $\gga$ type Bethe vectors}\label{an-pr-gga}

Let us fix a positive integer $s\in\PSR_{\gga}$.  
Using the definition  \r{BVg} 
 it is clear that Bethe vectors contain elements $\cF_{s,s+1}(\{t^s_{i_1},t^s_{i_2}\},t^{s+1}_j)$ 
 for  $i_1\not=i_2$ such that $i_1,i_2=1,\ldots,|\bt^s|$, $j=1,\ldots,|\bt^{s+1}|$ 
and $s=1,\ldots,n-2$. 
Analogously, for $s=2,\ldots,n-1$ there are elements
$\cF_{s,s-1}(\{t^{s}_{j_1},t^{s}_{j_2}\},t^{s-1}_{i})$ for 
$j_1\not=j_2$ such that $i=1,\ldots,|\bt^{s-1}|$, $j_1,j_2=1,\ldots,|\bt^{s}|$. 
According to the  proposition~3.1 in \cite{LPR-SR-CC},
the element $\cF_{s,s+1}(\{t^s_{i_1},t^s_{i_2}\},t^{s+1}_j)$
 vanishes on the hyperplanes\footnote{These type hyperplanes were 
 called the Serre stratums in the paper \cite{DKhP}.}
\begin{equation}\label{h-mi1}
t^{s+1}_j=t^s_{i_1}=t^s_{i_2}-c\,,\quad t^{s+1}_j=t^s_{i_2}=t^s_{i_1}-c\,,\quad
s=1,\ldots,n-2\,,
\end{equation}
for $i_1<i_2$ and is regular on the hyperplanes 
\begin{equation}\label{h1reg}
t^{s+1}_j=t^s_{i_1}=t^s_{i_2}+c\,,\quad t^{s+1}_j=t^s_{i_2}=t^s_{i_1}+c\,,\quad
s=1,\ldots,n-2\,.
\end{equation}

Analogously, according to  the proposition~3.2 in \cite{LPR-SR-CC},
the element $\cF_{s,s-1}(\{t^{s}_{j_1},t^{s}_{j_2}\},t^{s-1}_{i})$
vanishes on the hyperplanes
\begin{equation}\label{h-mi2}
t^{s-1}_i=t^{s}_{j_1}=t^{s}_{j_2}+c\,,\quad t^{s-1}_i=t^{s}_{j_2}=t^{s}_{j_1}+c\,,
\quad s=2,\ldots,n-1
\end{equation}
for $j_1<j_2$ and is regular on the hyperplanes 
\begin{equation}\label{h2reg}
t^{s-1}_i=t^{s}_{j_1}=t^{s}_{j_2}-c\,,\quad t^{s-1}_i=t^{s}_{j_2}=t^{s}_{j_1}-c\,,
\quad s=2,\ldots,n-1\,.
\end{equation}

These zeros which originates from the hyperplanes 
\r{h-mi1} and \r{h-mi2}  are 
canceled by the poles of the normalization factor in these vectors. 
Indeed, this normalization factor  contains the products 
\begin{equation}\label{n-f-mi2}
h(t^{s}_{i_2},t^s_{i_1})^{-1}\,h(t^{s}_{i_1},t^s_{i_2})^{-1}\quad
\mbox{and}\quad 
h(t^{s}_{j_1},t^{s}_{j_2})^{-1}\,h(t^{s}_{j_2},t^{s}_{j_1})^{-1}
\end{equation}
both appearing from the product $h(\bt^{s},\bt^s)$. 
The factors in \r{n-f-mi2} have simple poles on the 
hyperplanes \r{h-mi1} and \r{h-mi2} due to the equalities 
$t^s_{i_1}=t^s_{i_2}-c$ or $t^s_{i_2}=t^s_{i_1}-c$ and 
 $t^s_{j_1}=t^s_{j_2}+c$ or $t^s_{j_2}=t^s_{j_1}+c$
respectively. 

The normalization of the off-shell Bethe vectors \r{BVg} 
induces the apparition of {\sl spurious} singularities  
 of the off-shell Bethe vectors \r{BVg}. For example, 
the Bethe vectors 
\begin{equation}\label{sin-bv}
\BB(\{\bt^a,z\}_1^{n-2},\{\bt^{n-1},z+c,z\})\quad\mbox{and}\quad 
 \BB(\{\bt^1,z-c,z\},\{\bt^a,z\}_2^{n-1})
\end{equation}
become singular due to the factors $h(\bt^{n-1},\bt^{n-1})^{-1}$ and 
$h(\bt^{1},\bt^{1})^{-1}$ in the normalization of \r{BVg}. 
The Bethe vectors of the type \eqref{sin-bv} appear in
 the action of the monodromy entry $T_{1,n}(z)$ 
on the non-singular Bethe vectors $\BB(\{\bt^a\}_1^{n-2},\{\bt^{n-1},z+c\})$ and 
 $\BB(\{\bt^1,z-c\},\{\bt^a\}_2^{n-1})$:
\begin{equation}\label{T-act-n}
T_{1,n}(z)\,\BB(\{\bt^a\}_1^{n-1})=\lambda_n(z)\, h(\bt^1,z)\, h(z,\bt^{n-1})\,
\BB(\{\bt^a,z\}_1^{n-1})\,.
\end{equation}
However, the pre-factor in  \r{T-act-n}   compensates these spurious singularities.
Overall, the action is well-defined, and one can get a renormalized version of the 
Bethe vectors \r{sin-bv}  which does not contain any singularities.

In the appendix~\ref{AppB} we present examples of $\mathfrak{sl}_3$ type 
off-shell Bethe vectors on the Serre hyperplanes expressed in terms 
of monodromy matrix entries $T_{i,j}(u)$ and illustrate the mechanism 
of removing these spurious singularities. In what follows we 
will not pay attention to these spurious  singularities of the Bethe vectors 
\r{BVg}, keeping in mind that they are removable.

\subsection{Analytical properties of $\ggd$ type  Bethe vectors}
\label{an-pr-ggd}

The mechanism of compensation of the spurious singularities for 
$\mathfrak{o}_{2n}$ type Bethe vectors will be slightly different 
because the action \r{T-act-n} is replaced by the action 
\begin{equation}\label{T-act-D}
T_{n',n}(z)\,\BB(\{\bt^s\}_0^{n-1})=\lambda_n(z)\, \mu^n_{n'}(z;\bt)\ 
\BB(\{\bt^s,z\}_0^1,\{\bt^s,z,z_s\}_2^{n-1})\,,
\end{equation}
where  
\begin{equation}\label{mu-hD2}
\mu^n_{-n+1}(z;\bt)=-\ (-1)^{|\bt^0|+|\bt^1|}\ 
(n-1)\ \frac{h(z,\bt^{n-1})}{g(z_n,\bt^{n-1})}\,.
\end{equation}

The ratio $\frac{h(z,\bt^{n-1})}{g(z_n,\bt^{n-1})}$
is present in the function $\mu^n_{n'}(z;\bt)$ to compensate 
the spurious singularity of the Bethe vectors 
\begin{equation}\label{BVDdi1}
\BB(\{\bt^s,z\}_0^{1},\{\bt^s,z,z_s\}_2^{n-2},\{\bt^{n-1},z+c,z,z_{n-1}\})
\end{equation}
and 
\begin{equation}\label{BVDdi2} 
\BB(\{\bt^s,z\}_0^{1},\{\bt^s,z,z_s\}_2^{n-2},\{\bt^{n-1},z,z_{n-1}-c,z_{n-1}\})
\end{equation}
which appears in the r.h.s. of \r{T-act-D} 
after the action of $T_{n',n}(z)$ on the Bethe vectors 
\begin{equation}\label{BVDdi3}
\BB(\{\bt^s\}_0^{n-2},\{\bt^{n-1},z+c\})\quad\mbox{and}\quad
\BB(\{\bt^s\}_0^{n-2},\{\bt^{n-1},z_{n-1}-c\})\,.
\end{equation}
The singular factor  $h(z+\epsilon,z+c)^{-1}\Big|_{\epsilon\to 0}$ 
in the vector \r{BVDdi1} is compensated 
by the zero factor  $h(z,z+c)=0$ coming from $h(z,\{\bt^{n-1},z+c\})$ in the 
normalization function \r{mu-hD2}. Analogously, 
the singular factor  $h(z_{n-1}-c+\epsilon,z_{n-1})^{-1}\Big|_{\epsilon\to 0}$
in the vector \r{BVDdi2} is compensated 
by the zero factor  $g(z_{n},z_{n})^{-1}=0$ 
coming from $g(z_n,\{\bt^{n-1},z_n\})$ in the 
same normalization function \r{mu-hD2}.

The remaining simple pole singularities coming from the normalization factor 
$\prod_{s=0}^{n-1}h(\bt^s,\bt^s)^{-1}$ in the definition of the 
Bethe vectors \r{BVg} are always canceled by the zeros 
originating from the Serre relations \r{SR-g}. The mechanism of  
cancelation is the same as the one explained for 
$\gga$-type  Bethe vectors. 
We illustrate it for the Bethe vector 
\begin{equation}\label{BVDre1}
\BB(\{\bt^0,z\},\{\bt^1,z,z\pm c\},\{\bt^a,z,z_a\}_2^{n-1})\,.
\end{equation}
Let us show that this vector remains nonsingular  when considering the action 
\begin{equation}\label{BVDre3}
T_{n',n}(z)\,\BB(\bt^0,\{\bt^1,z\pm c\},\{\bt^a\}_2^{n-1})\,.
\end{equation}

The Bethe vector 
\begin{equation}\label{BVDre5}
\BB(\{\bt^0,z\},\{\bt^1,z,z+ c\},\{\bt^2,z,z-c\},\{\bt^a,z,z_a\}_3^{n-1})
\end{equation}
is non-singular because the parameter $t^2_j=z$ from the set 
$\{\bt^2,z,z-c\}$ and the parameters $t^1_{i_1}=z$, $t^1_{i_2}=z+c$ 
from the set $\{\bt^1,z,z+ c\}$
satisfy the condition of the Serre hyperplane \r{h-mi1}. Thus, the pole in the normalization 
factor $h(t^1_{i_1},t^1_{i_2})^{-1}$ is compensated by the zero 
of the element $\cF_{1,2}(\{t^1_{i_1},t^1_{i_2}\},t^2_j)$.

Analogously, the Bethe vector (recall that $z_2=z-c$)
\begin{equation}\label{BVDre6}
\BB(\{\bt^0,z\},\{\bt^1,z,z- c\},\{\bt^2,z,z-c\},\{\bt^a,z,z_a\}_3^{n-1})
\end{equation}
is non-singular because the parameter $t^2_j=z-c$ from the set 
$\{\bt^2,z,z-c\}$ and the parameters $t^1_{i_1}=z$, $t^1_{i_2}=z-c$ 
from the set $\{\bt^1,z,z- c\}$ again 
satisfy condition of the Serre hyperplane \r{h-mi1}. 
The singularity in the normalization 
factor $h(t^1_{i_2},t^1_{i_1})^{-1}$ is again compensated by the zero 
of the element $\cF_{1,2}(\{z,z-c\},z-c)$ \r{Ser-ele}.
It is clear from this consideration that the cancelation of the pole and zero 
due to the Serre relation appears in this case because the set 
$\bt^2$ in the Bethe vector \r{BVDre3} is extended also by $z_2=z-c$ that 
never happen in the case of $\gga$-type Bethe vectors. 

The fact that the Bethe vector \r{BVDre1} is regular 
on the hyperplanes 
\begin{equation}\label{sih1}
t^2_j=t^1_{i_1}=t^1_{i_2}-c\quad\mbox{and}\quad 
t^2_j=t^0_{i_1}=t^0_{i_2}-c
\end{equation}
is confirmed also by the normalization factor  \r{mu-hD2}
since it does not depend on the sets $\bt^0$ and $\bt^1$ but only on their 
cardinalities.

\subsection{Analytical properties of $\gsp$ off-shell Bethe vectors}
\label{an-pr-ggc}

In the case of $\gsp$-invariant integrable models, the action of the highest monodromy 
reads
\begin{equation}\label{T-act-C}
T_{n',n}(z)\,\BB(\{\bt^s\}_0^{n-1})=\lambda_n(z)\,\mu^n_{n'}(z;\bt)\ 
\BB(\{\bt^0,z\},\{\bt^s,z,z_s\}_1^{n-1})\,,
\end{equation}
where $z_s=z-c(s+1)$ and 
the normalization function 
\begin{equation}\label{mu-hC2}
\mu^n_{n'}(z;\bt)=-\  (-1)^{|\bt^0|}\ (n+1)\ h(\bt^1,z)\ 
\frac{h(z,\bt^{n-1})}{g(z_n,\bt^{n-1})}\,.
\end{equation}
Note that this normalization factor contains both the 
 ratio $\frac{h(z,\bt^{n-1})}{g(z_n,\bt^{n-1})}$ occurring for the $\mathcal{D}Y(\goN)$ Bethe vectors 
 and the factor $h(\bt^1,z)\,h(z,\bt^{n-1})$ which appeared in the $\DYSn$ case. 

As for $\ggd$ type Bethe vectors 
the ratio $\frac{h(z,\bt^{n-1})}{g(z_n,\bt^{n-1})}$ in \r{mu-hC2} 
  compensates
the spurious singularities of the Bethe vector 
\begin{equation}\label{BVCdi1}
\BB(\{\bt^0,z\},\{\bt^s,z,z_s\}_1^{n-2},\{\bt^{n-1},z+c,z,z_{n-1}\})
\end{equation}
and 
\begin{equation}\label{BVCdi2} 
\BB(\{\bt^0,z\},\{\bt^s,z,z_s\}_1^{n-2},\{\bt^{n-1},z,z_{n-1}-c,z_{n-1}\})
\end{equation}
which appear in the r.h.s. of \r{T-act-C} 
after the action of $T_{n',n}(z)$ on the Bethe vectors 
\begin{equation}\label{BVCdi3}
\BB(\{\bt^s\}_0^{n-2},\{\bt^{n-1},z+c\})\quad\mbox{and}\quad
\BB(\{\bt^s\}_0^{n-2},\{\bt^{n-1},z_{n}\})
\end{equation}
due to the normalization 
factor $h(\bt^{n-1},\bt^{n-1})^{-1}$ in \r{BVg}.  
The singular factor  $h(z+\epsilon,z+c)^{-1}\Big|_{\epsilon\to0}$ 
in the vector \r{BVCdi1} is compensated 
by the zero factor  $h(z,z+c)=0$ coming from 
$h(z,\{\bt^{n-1},z+c\})$ in the 
normalization function \r{mu-hC2}. Analogously, 
the singular factor  $h(z_{n-1}-c+\epsilon,z_{n-1})^{-1}\Big|_{\epsilon\to0}$ 
in the vector \r{BVCdi2} is compensated 
by the zero factor  $g(z_n,z_{n})^{-1}=0$ 
coming from $g(z_n,\{\bt^{n-1},z_n\})^{-1}$ in the 
same normalization function \r{mu-hC2}.
The remaining simple pole singularities coming from the normalization factor 
$h_0(\bt^0,\bt^0)\prod_{s=1}^{n-1}h(\bt^s,\bt^s)^{-1}$ 
 are always canceled by the zeros 
originating from the Serre relations. 
The mechanism of this 
cancelation is the same as the one detailed for 
$\gga$ type 
Bethe vectors.

Let us illustrate it for the  Bethe vectors 
\begin{equation}\label{BVCre2}
\BB(\{\bt^0,z,z\pm 2\,c\},\{\bt^a,z,z_a\}_1^{n-1})
\end{equation} 
which can be obtained by the action 
\begin{equation}\label{BVCre4}
T_{n',n}(z)\,\BB(\{\bt^0,z\pm 2\,c\},\{\bt^a\}_1^{n-1})\,.
\end{equation}

The Bethe vector (recall that $z_1=z-2\, c$)
\begin{equation}\label{BVCre5}
\BB(\{\bt^0,z,z+ 2\,c\},\{\bt^1,z,z-2\,c\},\{\bt^a,z,z_a\}_2^{n-1})
\end{equation}
is non-singular because the parameter $t^1_j=z$ from the set 
$\{\bt^1,z,z-2\,c\}$ and the parameters $t^0_{i_1}=z$ and  
$t^0_{i_2}=z+2\,c$ 
from the set $\{\bt^0,z,z- 2\,c\}$
are on the Serre hyperplane $t^1_j=t^0_{i_1}=t^0_{i_2}-2\,c$. 
The pole in the normalization 
factor $\hc(t^0_{i_1},t^0_{i_2})^{-1}$ is compensated by the zero 
of the element $\cF_{0,1}(\{z,z+2\,c\},z)$.

Analogously, the Bethe vector (recall again that $z_2=z-2\,c$)
\begin{equation}\label{BVCre6}
\BB(\{\bt^0,z,z-2\,c\},\{\bt^1,z,z- 2\,c\},\{\bt^a,z,z_a\}_2^{n-1})
\end{equation}
is non-singular because the parameter $t^1_j=z-2\,c$ from the set 
$\{\bt^1,z,z-2\,c\}$ and the parameters $t^0_{i_1}=z-2\,c$, 
$t^0_{i_2}=z$ 
from the set $\{\bt^0,z,z-2\,c\}$ again 
are on the Serre hyperplane $t^1_j=\tilde t^0_{i_1}=\tilde t^0_{i_2}-2\,c$. 
It is clear from this consideration that the cancellation of the poles and zeroes 
due to the Serre relation  occurs because the set 
$\bt^1$ in the Bethe vector \r{BVCre6} is extended by $z_1=z-2\,c$. 

Using similar arguments we can show that the Bethe vector
\begin{equation}\label{BVCsi1}
\BB(\{\bt^0,z\},\{\bt^1,z,z-c,z-2\,c\},\{\bt^s,z,z_s\}_2^{n-1})
\end{equation} 
 has a removable spurious singularity. Indeed, according to the definition
\r{BVg} it contains the zero element $\cF_{1,0}(\{z,z-c,z-2\,c\},z)=0$ 
due to the Serre relation   but 
a double pole coming from normalization due to $h(z-c,z)=h(z-2\,c,z-c)=0$.
The Bethe vector \r{BVCsi1} can be obtained 
from the action 
\begin{equation}\label{BVCsi2}
\begin{split}
&T_{n',n}(z)\,\BB(\bt^0,\{\bt^1,z-c\},\{\bt^s\}_2^{n-1})
=-\ \lambda_n(z)\,(-1)^{|\bt^0|}\,h(\{t^1,z-c\},z)\,\kappa_n\,
\frac{h(z,\bt^{n-1})}{g(z_n,\bt^{n-1})}\times\\
&\qquad\qquad\qquad\qquad\qquad\qquad\qquad\qquad\times 
\BB(\{\bt^0,z\},\{\bt^1,z,z-c,z-2\,c\},\{\bt^s,z,z_s\}_2^{n-1}).
\end{split}
\end{equation}
The r.h.s. of equality \r{BVCsi2} 
is regular because the simple pole of the Bethe vector is compensated by the simple zero coming from the 
normalization factor $h(\{t^1,z-c\},z)$. 

\subsection{Analytical properties of the Bethe vectors of 
$\ggb$-invariant models}\label{an-pr-ggb}

The action of the highest monodromy now takes the form
\begin{equation}\label{T-act-B}
T_{n',n}(z)\,\BB(\{\bt^s\}_0^{n-1})=\lambda_n(z)\,\mu^n_{n'}(z;\bt)\ 
\BB(\{\bt^s,z,z_s\}_0^{n-1})\,,
\end{equation}
where $z_s=z-c(s-1/2)$ and 
the normalization function 
\begin{equation}\label{mu-hB2}
\mu^n_{n'}(z;\bt)=-\  \frac{g(z_1,\bt^0)}{h(z,\bt^0)}\, 
\frac{h(z,\bt^{n-1})}{g(z_n,\bt^{n-1})}\,.
\end{equation}
Again, this normalization factor has 
the ratio $\frac{h(z,\bt^{n-1})}{g(z_n,\bt^{n-1})}$ 
occurring in $\ggd$ and $\gsp$ type Bethe vectors, 
which compensate the spurious singularities of 
the off-shell Bethe vectors 
\begin{equation}\label{BVBdi1}
\BB(\{\bt^s,z,z_s\}_0^{n-2},\{\bt^{n-1},z+c,z,z_{n-1}\})
\end{equation}
and 
\begin{equation}\label{BVBdi2} 
\BB(\{\bt^s,z,z_s\}_0^{n-2},\{\bt^{n-1},z,z_{n-1}-c,z_{n-1}\}).
\end{equation}
These vectors appear 
in the action of $T_{n',n}(z)$ on the Bethe vectors 
\begin{equation}\label{BVBdi3}
\BB(\{\bt^s\}_0^{n-2},\{\bt^{n-1},z+c\})\quad\mbox{and}\quad
\BB(\{\bt^s\}_0^{n-2},\{\bt^{n-1},z_{n}\}).
\end{equation}
The singular factor  $h(z+\epsilon,z+c)^{-1}\Big|_{\epsilon\to 0}$
 in the vector \r{BVBdi1} is compensated 
by the zero factor  $h(z,z+c)$ coming from 
$h(z,\{\bt^{n-1},z+c\})$ in the 
normalization function \r{mu-hB2}. Analogously, 
the singular factor  $h(z_{n-1}-c+\epsilon,z_{n-1})^{-1}\Big|_{\epsilon\to 0}$ 
in the vector \r{BVBdi2} is compensated 
by the zero factor  $g(z_n,z_{n})^{-1}$ 
coming from $g(z_n,\{\bt^{n-1},z_n\})^{-1}$ in the 
same normalization function \r{mu-hB2}.
The remaining simple pole singularities coming from the normalization factor 
$\prod_{s=1}^{n-1}h(\bt^s,\bt^s)^{-1}$ 
 are  canceled by the zeros 
originating from the Serre relations except the zero described 
by the proposition~3.3 in \cite{LPR-SR-CC}.  

According to this proposition the $\ggb$ type off-shell Bethe vectors 
\begin{equation}\label{BVBv1}
\BB(\{\bt^0,z-c/2,z,z+c/2\},\{\bt^1,z,z-c/2\},\{\bt^s,z,z_s\}_2^{n-1})
\end{equation}
and 
\begin{equation}\label{BVBv2}
\BB(\{\bt^0,z+c,z,z+c/2\},\{\bt^1,z,z-c/2\},\{\bt^s,z,z_s\}_2^{n-1})
\end{equation}
vanish because parameters $t^0_1=z-c/2$, $t^0_2=z$, 
$t^0_2=z+c/2$ in the sets $\{\bt^0,z-c/2,z,z+c/2\}$ and 
parameter $t^1_1=z-c/2$ in the set 
$\{\bt^1,z,z-c/2\}$ 
satisfy the relations of the Serre hyperplane 
$t^1_j=t^0_{1}=t^0_{2}-c/2=t^0_{3}-c$
while the parameters  in the sets $\{\bt^0,z+c,z,z+c/2\}$ and 
$\{\bt^1,z,z-c/2\}$ also satisfy this condition.  

The vectors \r{BVBv1} and \r{BVBv2}
can be obtained by the action of the highest monodromy entry 
$T_{n',n}(z)$ \r{T-act-B} on the non-singular Bethe vectors 
\begin{equation}\label{BVBnon}
\BB(\{\bt^0,z-c/2\},\{\bt^s\}_1^{n-1})\quad\mbox{and}\quad 
\BB(\{\bt^0,z+c\},\{\bt^s\}_1^{n-1})\,.
\end{equation}
The regularity of  this action follows from the cancelation 
of the zero coming from the Serre relation and the pole of the factor 
 $g(z_1+\epsilon,\{\bt^0,z-c/2\})\Big|_{\epsilon\to0}$ for the first Bethe vector in \r{BVBnon}. 
The same cancelation happens between the pole of the factor 
 $h(z,\{\bt^0,z+c\})^{-1}\Big|_{\epsilon\to0}$ and the zero from the Serre relation for the 
second Bethe vector in \r{BVBnon}. 
This shows that the vanishing of the Bethe vectors \r{BVBv1} and 
\r{BVBv2} is spurious and can be avoided by the renormalization.

\section{Conclusion}

The main result of this paper is to establish that off-shell Bethe vectors in a generic 
$\fg$-invariant integrable model can be defined within the framework of the projection method introduced in~\cite{EKhP} for quantum affine algebras and extended here to Yangian doubles. 
To prove this, it is sufficient to verify that the off-shell Bethe vectors defined by \r{BVg} in the framework of the projection method satisfy the defining relations for off-shell Bethe vectors formulated in \cite{LPR-RR}. In the case of integrable models associated
with the supersymmetric Yangian doubles $\mathcal{D}Y(\mathfrak{gl}(m|n))$,
the projection method was previously explored in~\cite{HLPRS17}.

This result, together with those obtained in the paper \cite{LPR-RR}, shows that the
structure of the space of states in $\fg$-invariant quantum integrable models
can be studied using two equivalent approaches: either with the explicit
construction of off-shell Bethe vectors via the projection method, or with their
implicit definition~\ref{def:offBV}. 

A natural next step is to investigate scalar products of off-shell Bethe
vectors, their norms, and form factors in generic $\fg$-invariant quantum
integrable models using these complementary approaches. 
We leave the exploration of this direction for future work.

\section*{Acknowledgement}
S.P. acknowledges support from the PAUSE Programme and the hospitality of LAPTh, where this work was conducted.
The work of A.~L. was supported by the Beijing Natural Science Foundation (IS24006) and Beijing
Talent Program.

\appendix

\section{Examples of $\fs\fl_3$ Bethe vectors on the Serre hyperplanes}
\label{AppB}

Let us illustrate that the two different approaches to describe the structure of the space 
of states in $\fg$-invariant quantum integrable models yields the same 
formulas for the off-shell Bethe vectors. To do this we consider 
 Bethe vectors in $\fg\fl_3$-invariant model depending on 
 small cardinality sets of Bethe parameters. 

It was shown in \cite{LPR-RR} that 
the definition~\ref{def:offBV} leads to   
the rectangular recurrence relations introduced in \cite{LPR25}. 
These rectangular recurrence relations for the $\fs\fl_3$ type off-shell 
Bethe vector $\BB(u,\{v_1,v_2\})$  
allows to present this vector explicitly as follows
\begin{equation}\label{BVAex}
\begin{split}
\BB(u,\{v_1,v_2\})&=\frac{1}{\lambda_2(u)\,\lambda_3(v_1)\,\lambda_3(v_2)}
\,\frac{1}{g(v_1,u)\,g(v_2,u)}\,\frac{1}{h(v_1,v_2)\,h(v_2,v_1)}\times\\
&\quad\times\Big(T_{2,3}(v_2)\,T_{2,3}(v_1)\,T_{1,2}(u)+\\
&\qquad +\lambda_2(u)\,g(v_1,u)\,f(v_2,v_1)\,T_{1,3}(v_1)\,T_{2,3}(v_2)+\\
&\qquad +\lambda_2(u)\,g(v_2,u)\,f(v_1,v_2)\,T_{1,3}(v_2)\,T_{2,3}(v_1)
\Big)\,|0\rangle\,.
\end{split}
\end{equation}

It was shown in section~\ref{an-pr-gga}  that 
the singularities of normalization factor $h(v_1,v_2)^{-1}\,h(v_2,v_1)^{-1}$ for  
$v_2=v_1-c$ or $v_1=v_2-c$ are cancelled by the Serre relation
on the Serre hyperplanes $u=v_1=v_2+c$ and $u=v_2=v_1+c$. 
There are also spurious singularities at the hyperplanes
$u=v_1=v_2-c$ and $u=v_2=v_1-c$ which have nothing to do with 
the Serre relations and that can be removed through a redefinition of the 
corresponding Bethe vectors.

Let us illustrate these properties from the explicit expression 
\r{BVAex} for the Serre hyperplanes  $u=v_1=v_2+c$ and $u=v_1=v_2-c$. 
When $u=v_1$ 
the first and third terms in \r{BVAex} disappear, due to the factor $\frac1{g(v_1,u)}$, and we get
\begin{equation}\label{forA}
\BB(u,\{u,v_2\})=\frac{T_{1,3}(u)\,T_{2,3}(v_2)|0\rangle}
{\lambda_3(u)\,\lambda_3(v_2)\,h(u,v_2)}\,.
\end{equation}

Assuming that the monodromy matrix entries are generic and are not equal to zero 
or infinity when they act on the vacuum vector we conclude that 
the Bethe vector \r{forA} is non-singular when $v_2\to u- c$
and corresponds to the Serre hyperplane $u=v_1=v_2+c$. 
It leads to 
\begin{equation}\label{A10b}
\BB(u,\{u,u-c\})=\frac{T_{1,3}(u)\,T_{2,3}(u-c)|0\rangle}
{2\,\lambda_3(u)\,\lambda_3(u-c)}\,.
\end{equation}

When $v_2\to u+c$ the Bethe vector \r{forA} becomes singular due to a 
spurious pole coming from the function $h(u,v_2)^{-1}$. 
However, the combination $h(u,u+c)\,\BB(u,\{u,u+c\})$ remains non-singular 
and can be considered as the Bethe vector 
\begin{equation*}
\BB'(u,\{u,u+c\})=h(u,u+c+\varepsilon)\,\BB(u,\{u,u+c+\varepsilon\})
\Big|_{\varepsilon\to 0}=\frac{T_{1,3}(u)\,T_{2,3}(u+c)|0\rangle}
{\lambda_3(u)\,\lambda_3(u+c)}\,.
\end{equation*}

Expression \r{forA} for the Bethe vector  $\BB(u,\{u,v_2\})$ can be also 
seen from the defining relation \r{T-act-g} for off-shell Bethe vectors.
In the case of $\fs\fl_3$-type off-shell Bethe vectors this relation is 
\begin{equation}\label{ex2}
T_{1,3}(u)\,\BB(\bt^1,\bt^{2})=
\lambda_3(u)\,h(\bt^1,u)\,h(u,\bt^2)\,
\BB(\{\bt^1,u\},\{\bt^{2},u\})\,.
\end{equation}
When the set $\bt^1=\varnothing$ and the cardinality of the set $\bt^2$ 
is equal to 1 ($\bt^2=\{v_2\}$) the  relation \r{ex2} 
becomes 
\begin{equation}\label{ex0}
T_{1,3}(u)\,\BB(\varnothing,v_2)=
\lambda_3(u)\,h(u,v_2)\,
\BB(u,\{u,v_2\})\,,
\end{equation}
where the Bethe vector $\BB(\varnothing,v_2)$ is equal  
 to
\begin{equation}\label{ex3}
\BB(\varnothing,v_2)=\frac{1}{\lambda_3(v_2)}\,T_{2,3}(v_2)\,|0\rangle\,.
\end{equation}
Substituting \r{ex3} to \r{ex0} one gets \r{forA}.

Using methods developed in \cite{HLPRS17} one can get  
\r{BVAex} calculating the projection of the ordered product of simple root 
currents $\Pfp(F_2(v)F_1(u_1)F_1(u_2))$. 
Rather than doing this calculation, we  present instead the 
Bethe vector $\BB(u,\{u,v_2\})$ \r{forA}
 according to its definition \r{BVg} and using the projection method
\begin{equation}\label{cal1}
\begin{split}
\BB(u,\{u,v_2\})&=
\left.\frac{\Pfp\sk{\cF_{2,1}(\{v_1,v_2\},u)}|0\rangle}{h(v_1,v_2)\,h(v_2,v_1)}
\right|_{u=v_1}=
\frac{1}{h(u,v_2)}\ \Pfp\Big(F_{3,2}(v_2)\,F_{3,1}(u)\Big)|0\rangle\,.
\end{split}
\end{equation}
The fact that the product of the currents $F_{3,2}(v_2)\,F_{3,1}(u)$ 
under projection in \r{cal1}
is well defined was explained in \cite{LPR-SR-CC}.

To calculate the
projection in \r{cal1}  we note first that \footnote{We remind that $\Pfp\Big(\Pfm(\F)\,\F'\Big)=0$.} 
\begin{equation}\label{pro1}
\Pfp\Big(F_{3,2}(v)\,F_{3,1}(u)\Big)=\Pfp\Big(\FF^+_{3,2}(v)\,F_{3,1}(u)\Big)\,,
\end{equation}
where 
\begin{equation}\label{pro2}
\FF^+_{3,2}(v)=\Pfp\Big(F_{2}(v)\Big)=\sum_{m\geq 0}F_{2}[m](v/c)^{-m-1}\,.
\end{equation}
The commutation  relation 
\begin{equation}\label{3231a}
F_2(z)\,F_{3,1}(u)=f(u,z)\,F_{3,1}(u)\,F_2(z)
\end{equation}
between the simple root and composed currents is  implied by  the Serre relation
\r{SR-g} (see section~4.2 in \cite{LPR-SR-CC}). Taking into account 
that the function $f(u,z)$ in the commutation relation \r{3231a}
is a series with respect to the powers of $z/u$ and 
looking  the coefficient of $(z/c)^{-a-1}$ for $a\geq0$ 
 one gets 
\begin{equation*}
F_2[a]\,F_{3,1}(u)=F_{3,1}(u)\sk{F_2[a]+\sum_{b=0}^{\infty} 
F_2[a+b](u/c)^{-b-1}}\,.
\end{equation*}
Multiplying this equality by $(v/c)^{-a-1}$ and summing over $a\geq 0$ 
one gets 
\begin{equation}\label{pro4}
\FF^+_{3,2}(v)\,F_{3,1}(u)=F_{3,1}(u)\sk{f(u,v)\ \FF^+_{3,2}(v)-
g(u,v)\ \FF^+_{3,2}(u)}\,,
\end{equation}
that is to say
\begin{equation}\label{pro5}
\Pfp\Big(F_{3,2}(v)\,F_{3,1}(u)\Big)
=\FF^+_{3,1}(u)\sk{f(u,v)\ \FF^+_{3,2}(v)-
g(u,v)\ \FF^+_{3,2}(u)}\,.
\end{equation}
In \r{pro5} we used the property 
$\Pfp(\F_1\cdot\Pfp(\F_2))=\Pfp(\F_1)\cdot\Pfp(\F_2)$ 
and the relation  $\Pfp(F_{3,1}(u))=\FF^+_{3,1}(u)$ (see proposition~\ref{inv-DF}).

In order to verify relation \r{forA} one has to replace the action 
of the Gaussian coordinates in \r{pro5} on the vacuum vector $|0\rangle$ 
by the action of the monodromy entries using Gaussian decomposition \r{Gauss}
\begin{equation}\label{pro7}
T_{2,3}(v)=\FF^+_{3,2}(v)\,k^+_3(v),\quad T_{1,3}(u)=\FF^+_{3,1}(u)\,k^+_3(u)\,.
\end{equation}
The commutation relation
\begin{equation}\label{pro8}
k^+_3(u)\,\FF^+_{3,2}(v_2)\,k^+_3(u)^{-1}=f(u,v_2)\ \FF^+_{3,2}(v_2)-
g(u,v_2)\ \FF^+_{3,2}(u)
\end{equation}
allows to obtain 
\begin{equation}\label{pro11}
\Pfp\Big(F_2(v_2)\,F_{3,1}(u)\Big)|0\rangle=\frac{T_{1,3}(u)\,T_{2,3}(v_2)|0\rangle}
{\lambda_3(u)\,\lambda_3(v_2)}
\end{equation}
which together with \r{cal1} verifies \r{forA} in the framework of the projection
method.

\end{document}